\newcommand{\Sum}{\displaystyle\sum}
\newcommand{\Int}{\displaystyle\int}
\def\braket#1{\mathinner{\langle{#1}\rangle}}
\newcommand{\alphab}{\boldsymbol{\alpha}}
\newcommand{\norm}[1]{\left\lVert#1\right\rVert}
\newcommand{\w}{\boldsymbol{\mathrm{w}}}
\newcommand{\R}{\mathbb{R}}
\newcommand{\X}{\mathcal{X}}
\newcommand{\Y}{\mathcal{Y}}
\newcommand{\PP}{\mathbb{P}}
\newcommand{\Phib}{\boldsymbol{\Phi}}
\DeclareMathOperator*{\argmin}{\arg\!\min}
\newtheorem{theorem}{Theorem}
\begin{document}

\title{Classically Approximating Variational Quantum Machine Learning with Random Fourier Features}
\author{Jonas Landman}
\affiliation{School of Informatics, University of Edinburgh, 10 Crichton Street, Edinburgh, United Kingdom}
\affiliation{QC Ware, Palo Alto, USA and Paris, France}

\author{Slimane Thabet}
\affiliation{Laboratoire d’Informatique de Paris 6, CNRS, Sorbonne Université, 4 Place Jussieu, 75005 Paris, France}
\affiliation{PASQAL SAS, 7 avenue Léonard de Vinci, 91300 Massy, France}

\author{Constantin Dalyac}
\affiliation{Laboratoire d’Informatique de Paris 6, CNRS, Sorbonne Université, 4 Place Jussieu, 75005 Paris, France}
\affiliation{PASQAL SAS, 7 avenue Léonard de Vinci, 91300 Massy, France}

\author{Hela Mhiri}
\affiliation{Laboratoire d’Informatique de Paris 6, CNRS, Sorbonne Université, 4 Place Jussieu, 75005 Paris, France}
\affiliation{ENSTA Paris, Institut polytechnique de Paris}

\author{Elham Kashefi}
\affiliation{School of Informatics, University of Edinburgh, 10 Crichton Street, Edinburgh, United Kingdom}
\affiliation{Laboratoire d’Informatique de Paris 6, CNRS, Sorbonne Université, 4 Place Jussieu, 75005 Paris, France}

\date{\today}
\begin{abstract}
    Many applications of quantum computing in the near term rely on variational quantum circuits (VQCs). They have been showcased as a promising model for reaching a quantum advantage in machine learning with current noisy intermediate scale quantum computers (NISQ). It is often believed that the power of VQCs relies on their exponentially large feature space, and extensive works have explored the expressiveness and trainability of VQCs in that regard. In our work, we propose a classical sampling method that may closely approximate a VQC with Hamiltonian encoding, given only the description of its architecture. It uses the seminal proposal of Random Fourier Features (RFF) and the fact that VQCs can be seen as large Fourier series. 
    We provide general theoretical bounds for classically approximating models built from exponentially large quantum feature space by sampling a few frequencies to build an equivalent low dimensional kernel, and we show experimentally that this approximation is efficient for several encoding strategies.
    Precisely, we show that the number of required samples grows favorably with the size of the quantum spectrum. This tool therefore questions the hope for quantum advantage from VQCs in many cases, but conversely helps to narrow the conditions for their potential success. We expect VQCs with various and complex encoding Hamiltonians, or with large input dimension, to become more robust to classical approximations.
\end{abstract}
\maketitle

\section*{Introduction}

\begin{figure*}[t!]
    \centering
    \includegraphics[width=\textwidth]{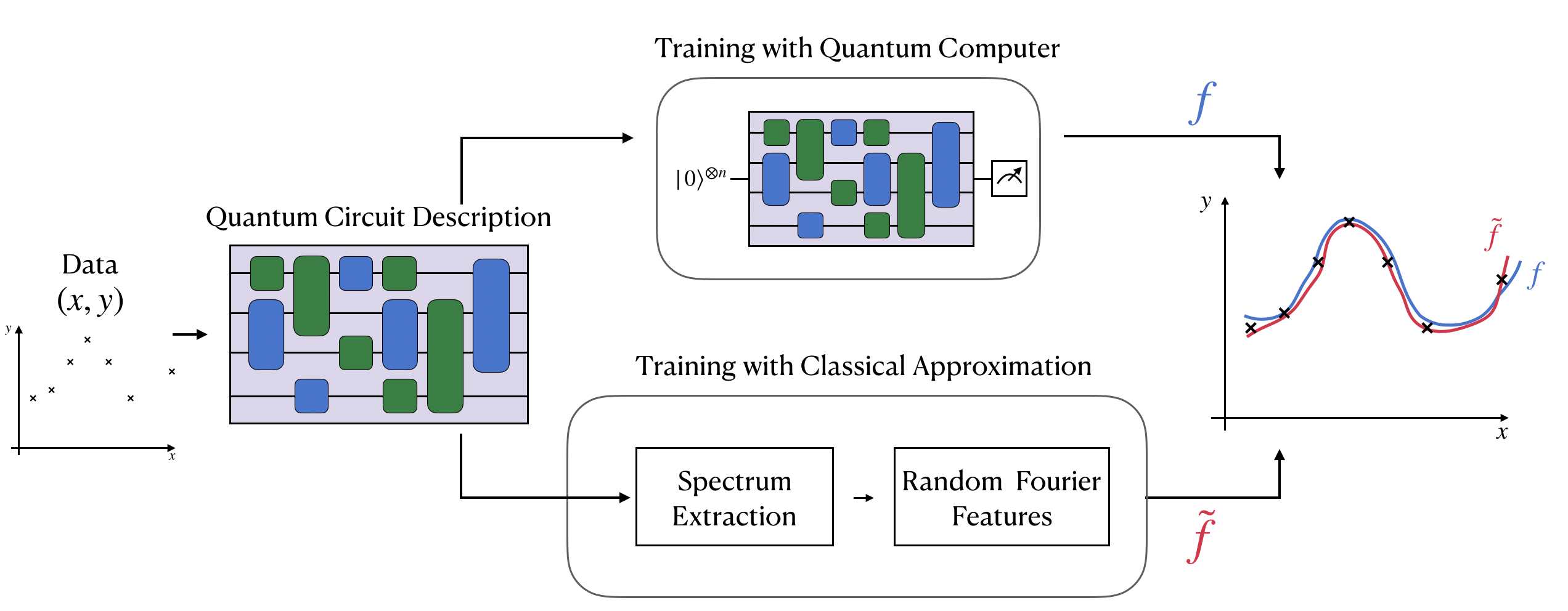}
    \caption{\textbf{Random Fourier Features as a classical approximator of quantum models.} Instead of training a Variational Quantum Circuit by using a quantum computer, we propose to train a classical kernel built by sampling a few frequencies of the quantum model. These frequencies can be derived from the quantum circuit architecture, in particular from the encoding gates. Using Random Fourier Features, one can build a classical model which performs as good as the quantum model with a bounded error and a number of samples that grows nicely.}
    \label{fig:Fig_1}
\end{figure*}

Many applications of quantum computing in the near term rely on variational quantum circuits (VQCs) \cite{bharti2021noisy, cerezo2021variational}, and in particular for solving machine learning (ML) tasks. VQCs are trained using classical optimization of their gates' parameters, a method borrowed from classical neural networks. Many early works have shown promising results, both empirically and in theory \cite{abbas2021power, cong2019quantum, huang2021power}. However, whether these variational methods can provide a quantum advantage in the general case with a scaling number of qubits has not been proven. 

What would make VQCs advantageous compared to classical algorithms for machine learning on classical data? The most common and intuitive answer is the formation of a \emph{large feature space}, due to the projection of data points in an exponentially large Hilbert space. Understanding what is happening in this Hilbert space \cite{Schuld2021}, and most importantly, knowing how we can exploit its size is an important question for this field. When it comes to trainability, it has been shown that these exponential spaces are in fact drawbacks \cite{mcclean2018barren}. Regarding expressivity, it is crucial to understand what kind of functions VQCs can learn better than a classical algorithm.

One of the biggest challenges we face while trying to answer these questions is to find the fair comparison between VQCs and classical ML algorithms. Recent works \cite{Schuld2021, schuld2021supervised, schreiber2022classical} have shown equivalence between VQCs and both Fourier series and kernel methods. In fact, VQCs are extremely powerful Fourier series: the functions they can learn are predetermined by a set of frequencies which can become numerous with the dimension of the input, and the number and complexity of the quantum gates used. 

In this work, we adapt Random Fourier Features (RFF) \cite{Rahimi2009}, a successful classical sampling algorithm aimed at efficiently approximating some large classical kernel methods. We designed three different RFF strategies to approximate VQCs. For each one, we analyze its efficiency in reproducing results obtained by a VQC. To do so, we have studied in details the expressivity of VQCs, understood where their power could come from, and compared it each time with RFF.  

Our method consists in analyzing the encoding gates of the VQC to extract the final frequencies of its model and sample from them. Notably, if the VQCs possesses simple encoding gates such as Pauli gates
, we show that the large quantum feature space is not fully exploited, making RFF even more efficient. If the number of frequencies in the VQC grows exponentially, the number of samples required by RFF grows only linearly. Finally, we have empirically compared VQCs and RFF on real and artificial use cases. On these, RFF were able to match the VQC's answer, and sometimes outperform it. 
Some hope resides for VQCs with non-diagonalizable encoding Hamiltonians, preventing the use of usual RFF. However, even in this case, we provide an alternative RFF strategy, along with theoretical approximation bounds. 

Our conclusion is that, despite the potentially exponential number of frequencies in the functions that a VQC can create, Random Fourier Features can be used to approximate the same resulting function on a given dataset, in many cases. This restrains the regime in which VQCs can offer a true advantage over classical in learning tasks over classical data, or invites us to change the way VQCs are defined.

The manuscript is structured as follows. In Section \ref{sec:VQC_preliminaries}, we recall the properties of variational quantum circuits, and the fact that they can be expressed as Fourier series by detailing the structure of their spectrum. In Section \ref{sec:classical_RFF}, we introduce the classical method of Random Fourier Features. In Section \ref{sec:method_and_sampling}, we finally present three ways of applying RFF to approximate VQC's models. These theoretical proposals are tested over real and artificial datasets in Section \ref{sec:exp_numerics}.

\section{Spectral Properties of Variational Quantum Circuits}
\label{sec:VQC_preliminaries}

\subsection{Definitions}

We consider a standard ML task where a function $f$, named \emph{model}, must be optimized to map data points to their target values. The data used for the training is made of $M$ points $x=(x_1, \dots, x_d)$ in $\X=\R^d$ along with their target values $y$ in $\Y=\R$. We define a \textit{quantum model} as the family of parametrized functions $f: (\X, \Theta) \longrightarrow \Y$, such that
\begin{equation}
\label{eq:quantum_model}
    f(x; \theta) = \langle 0|U(x; \theta)^\dagger O U(x; \theta) |0\rangle
\end{equation}
    
where $U(x; \theta)$ is a unitary that represents the parametrized quantum circuits, $\theta$ represents the trainable parameters from a space $\Theta$, and $O$ is an observable. We can always describe the parametrized quantum circuit as a series of two types of gates. The first are called \emph{encoding} gates as they only depend on input data values, whereas the \emph{trainable} gates depend on internal parameters that are optimized during training. A typical instance of a Variational Quantum Circuit (VQC) is illustrated in Fig.\ref{fig:VQC_explained}. In the recent literature, these gates are often grouped as \emph{layers}, which is not mandatory, since any circuit can be sliced into alternating sequences of encoding and training blocks (even if containing a single gate). 

Any quantum unitary implements the evolution of a quantum system under a Hamiltonian. Thus, we choose to write the $\ell^{th}$ encoding gates as $\exp(-ix_i H_{\ell})$, where $x_i$ is one of the $d$ components of $x$, and $H_{\ell}$ is a Hamiltonian matrix of size $2^p$ if $p$ is the number of qubits this gate acts on. We will note $L$ the number of encoding gates for each dimension of $x$ (the same for each dimension, for notation simplicity).





~\\
\indent In this framework, the aim is to find the optimal mapping between data points and their target values. This is done by optimizing the parameters $\theta$ to find the best guess $f^*$ such that

\begin{gather}
    f^* = \argmin_\theta \frac{1}{M}\Sum_{i=1}^M l(f(x_i ; \theta), y_i)
\end{gather}
where $l$ is a cost function adapted to the task. For a standard regression tasks, we can choose $l(z, y) = |z-y|^2$ 

\subsection{Quantum Models are Large Fourier Series}
\label{sec:quantum_models_are_fourier}
 We know since \cite{Schuld2021} that the family of quantum models defined in Eq.(\ref{eq:quantum_model}) can be rewritten as a Fourier series:

\begin{equation}
\label{eq:f_as_fourier_serie_complex}
    f(x;\theta) = 
\sum_{\omega\in\Omega} c_{\omega}e^{i\omega x}
\end{equation}

where the spectrum $\Omega$ of frequencies is determined by the ensemble of eigenvalues of the encoding Hamiltonians and the coefficients $c_\omega$ depend on the parametrized ansatz, as pictured in Fig.\ref{fig:VQC_explained}.

In order to familiarize the reader with the structure of the spectrum, we explicitly build $\Omega$ in the case of a one dimensional data input ($\X = \mathbb{R}$) and with a variational circuit containing only $L$ encoding gates. The accessible frequency spectrum $\Omega$ is the ensemble of all the differences between all possible sums of the eigenvalues of the encoding gates as shown in Fig.\ref{fig:tree_sampling}. 

We note $\lambda_\ell^{k}$ the $k^{th}$ eigenvalue of the $\ell^{th}$ encoding Hamiltonian $H_\ell$ having $d_{\ell}$ eigenvalues. We use the multi-index  $\boldsymbol{i} = (i_1, \dots, i_L)$ indicating which eigenvalue is taken from each encoding Hamiltonian. We define $\Lambda_{\boldsymbol{i}}$ as
\begin{equation}
    \Lambda_{\boldsymbol{i}} = \lambda_1^{i_1} + \dots + \lambda_L^{i_L}
\end{equation} 

Finally, we can express $\Omega$, the set of frequencies as in Eq.\ref{eq:Omega_definition_one_dimension}: 

\begin{equation}
\label{eq:Omega_definition_one_dimension}
\Omega = \bigg\{ \Lambda_{\boldsymbol{i}} - \Lambda_{\boldsymbol{j}}, \boldsymbol{i}, \boldsymbol{j} \in \prod_{\ell=1}^L [|1,d_{\ell}|] \bigg\},
\end{equation}

The simplest case is called Pauli encoding, when all encoding Hamiltonians are in fact Pauli matrices (e.g. encoding gates $R_Z(x) = e^{-i\frac{x}{2}\sigma_Z}$) as in  \cite{Schuld2021, Caro2021encodingdependent}. In this case, all the eigenvalues are $\lambda = \pm 1/2$, and therefore, the $\Lambda_{\boldsymbol{i}}$ are all the integers (or half-integers, if $L$ is odd) in $[-L/2, L/2]$. It follows that the set of distinct values in $\Omega$ is simply the set of integers in $\llbracket -L, L \rrbracket$. In this case, there are many redundant frequencies, due to the fact that all Pauli eigenvalues are the same. Namely, only $2L+1$ distinct frequencies among the $2^{2L}$ possible values of $\Lambda_{\boldsymbol{i}} - \Lambda_{\boldsymbol{j}}$. As shown in Fig.\ref{fig:tree_sampling}, more various eigenvalues would create more distinct frequencies in the end. In the rest of the paper, $\Omega$ will denote the set of distinct frequencies, without redundancy. 

In our experiments (Section \ref{sec:exp_numerics}), we  observe an important phenomenon: redundant frequencies are likely to have high coefficients. 
Unique frequencies might often have in contrast small coefficients, reducing the potential expressivity of the VQC. We see that the redundancy might therefore play an important role in the expressivity of VQCs, and leave theoretical proof for future work. In fact, the impact of the frequencies redundancy on the quantum model has been recently highlighted in \cite{schuld2022generalization} where an analytical correspondence between a frequency redundancy and its Fourier coefficient has been proven for a simplified class of parameterized quantum models.

These arguments give some intuition on why one should use encoding gates from Hamiltonians with rich and various eigenvalues, by taking complex interactions over many qubits. A global Hamiltonian over $n$ qubits, hard to implement, could potentially have $2^n$ distinct eigenvalues, thus enlarging $\Omega$ and avoiding redundancy. Another approach from \cite{shin2022exponential} consists in adding scaling factors in the Pauli encoding gate to modify their eigenvalues and avoid redundancy. It results in an exponential number of integer frequencies, with respect to $L$, with many very high frequencies.

\begin{figure}[h]
    \centering
    \includegraphics[width=0.5\textwidth]{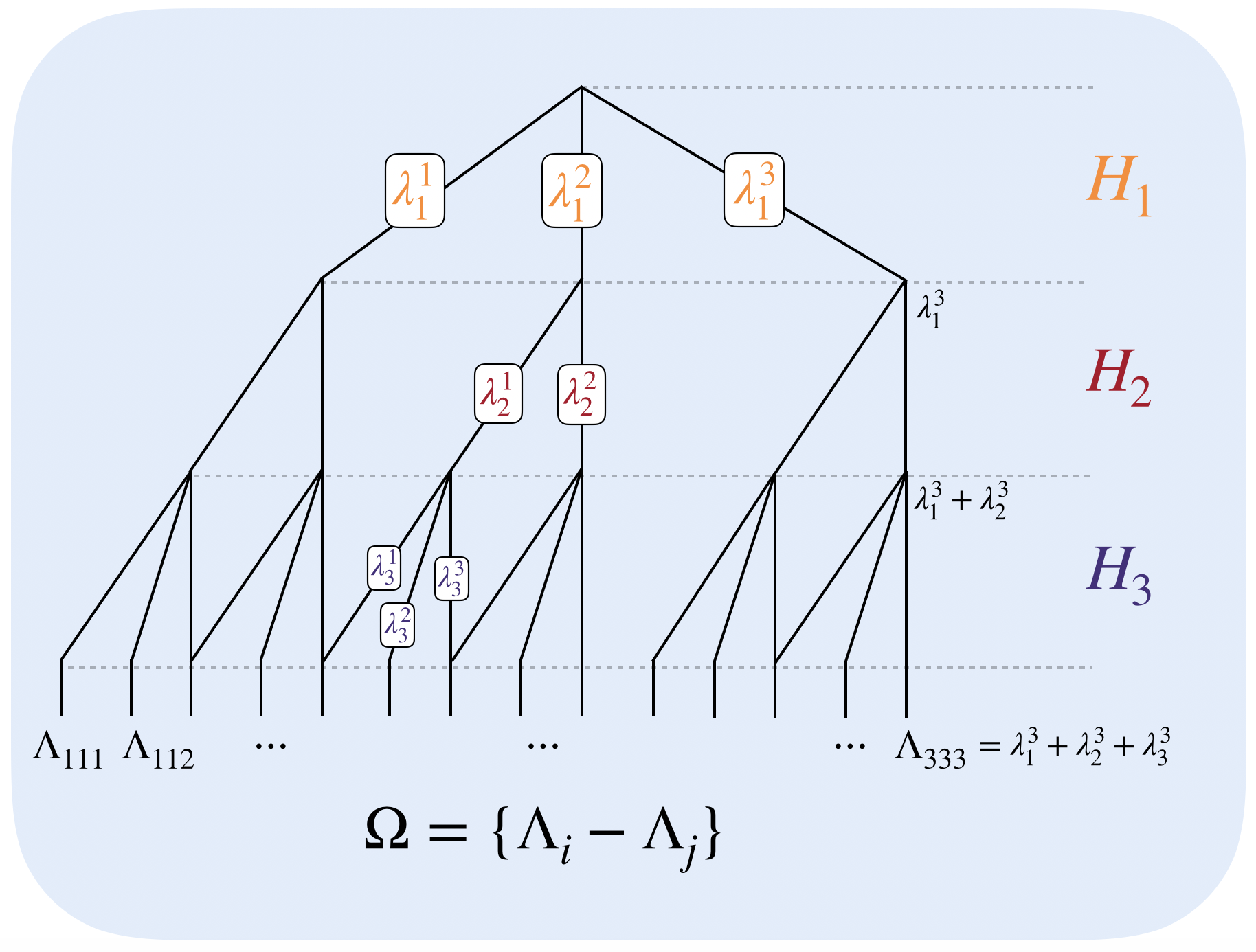}
    \caption{\textbf{From encoding Hamiltonians to Frequencies.} The frequencies composing the VQC model (on one dimensional input) come from all the combinations of eigenvalues from each encoding Hamiltonians. This can be seen as a tree, with $L=3$ Hamiltonians in this figure. We also see potential redundancy in the leaves.}
    \label{fig:tree_sampling}
\end{figure}

We can now generalize, if we now suppose that $\X = \R^d$, such that we encode a vector $x=(x_1, \dots, x_d)$ in our quantum model, then $\Omega$ becomes the following $d-$dimensional Cartesian product

\begin{equation}
\label{eq:Omega_general}
    \Omega = \Omega_1 \times \Omega_2 \times \dots  \times \Omega_d 
\end{equation}

where each $\Omega_\kappa$ is defined as in Eq.\ref{eq:Omega_definition_one_dimension} on its own set of Hamiltonians.


In this context, note that the frequencies $\omega$ are now vectors in $\R^d$ and there are $d$ different trees to build $\Omega$ (see Fig. \ref{fig:tree_sampling}). Note that for notation simplicity, we assumed that $L$ gates were applied on each input's component, but it can be generalized to any number of gates per dimension.

We therefore see that the size of the spectrum $|\Omega|$ can potentially grow exponentially with the number of encoding gates and the dimension of the input data. For instance, if we consider a $d$-dimensional vector $x$ and $L$ Pauli-encoding gates for each dimension in such a way that there are $Ld$ encoding gates in the VQC. According to equation (\ref{eq:Omega_general}),
the size of the spectrum $\Omega$ would scale as $O(L^d)$, which becomes quickly intractable as $d$ increases. As an example, the spectrum associated to a VQC with $L=20$ encoding gates and $d=16$ would require more than one hundred times the world's storage data capacity available in 2007 to be stored \cite{doi:10.1126/science.1200970}.

However, can such large Fourier series be approximated with classical methods? It would consist in building a classical approximator $\tilde{f}$ as 
\begin{gather}
    \tilde{f}(x) = \Sum_{\omega \in \tilde{\Omega}} \tilde{c}_\omega e^{i\omega x}
\end{gather}
such that $\tilde{\Omega}$ is of tractable size and the two solution are close, written as
$\norm {f(x) - \tilde{f}(x)} \leq \varepsilon$, using a given error measure (infinite or $\ell_2$ norms, for instance).

In the cases where the construction of such a model is possible, it would imply that although classically simulating the encoding VQC might not be possible, the quantum model that emerges from it can be efficiently approximated in a classical way.

\subsection{Quantum models are shift-invariant kernel methods}
\label{sec:quantum_models_are_shifinvariantkernel}

As the quantum model is a real-valued function, it follows that $\omega \in \Omega$ implies $-\omega \in \Omega$ and $c_\omega = c_{-\omega}^*$. We express the Fourier series of the quantum model as a sum of trigonometric functions by defining for every $\omega \in \Omega$:

\begin{gather}
    a_\omega := c_\omega + c_{-\omega} \in \mathbb{R} \\
    b_\omega := \frac{1}{i}(c_\omega - c_{-\omega}) \in \mathbb{R}
\end{gather}

such that 

\begin{equation}
\label{eq:VQC_function_analytic}
\begin{split}
     f(x;\theta) &= 
\sum_{\omega\in \Omega_+} c_{\omega}e^{i\omega x}+ c_{-\omega}e^{-i\omega x}\\ 
&= \sum_{\omega\in \Omega_+} a_{\omega}\cos(\omega x) + b_{\omega}\sin(\omega x)
\end{split}
\end{equation}

where $\Omega_+$ contains only half of the frequencies from $\Omega$. 
Considering only Pauli gates, if $d = 1$, we simply have $\Omega = \llbracket -L,L \rrbracket$ and $\Omega_+ = \llbracket 0,L \rrbracket$.
In dimension $d$,  we have $\Omega = \llbracket -L,L \rrbracket^d$ and $\Omega_+$ is built by keeping half of the frequencies (after removing those of opposite sign), plus the null vector. In the end, we have 
\begin{equation}
\label{eq:size_of_Omega_plus}
|\Omega_+| = \frac{(2L+1)^d-1}{2} +1
\end{equation}
With a more general encoding scheme, if there is a different number of distinct positive frequencies per dimension, the formula is different but is built similarly.

In the following parts, we will focus solely on $\Omega_+$ and conveniently drop the $+$ subscript.


\begin{figure}[h]
    \centering
    \includegraphics[width=0.35\textwidth]{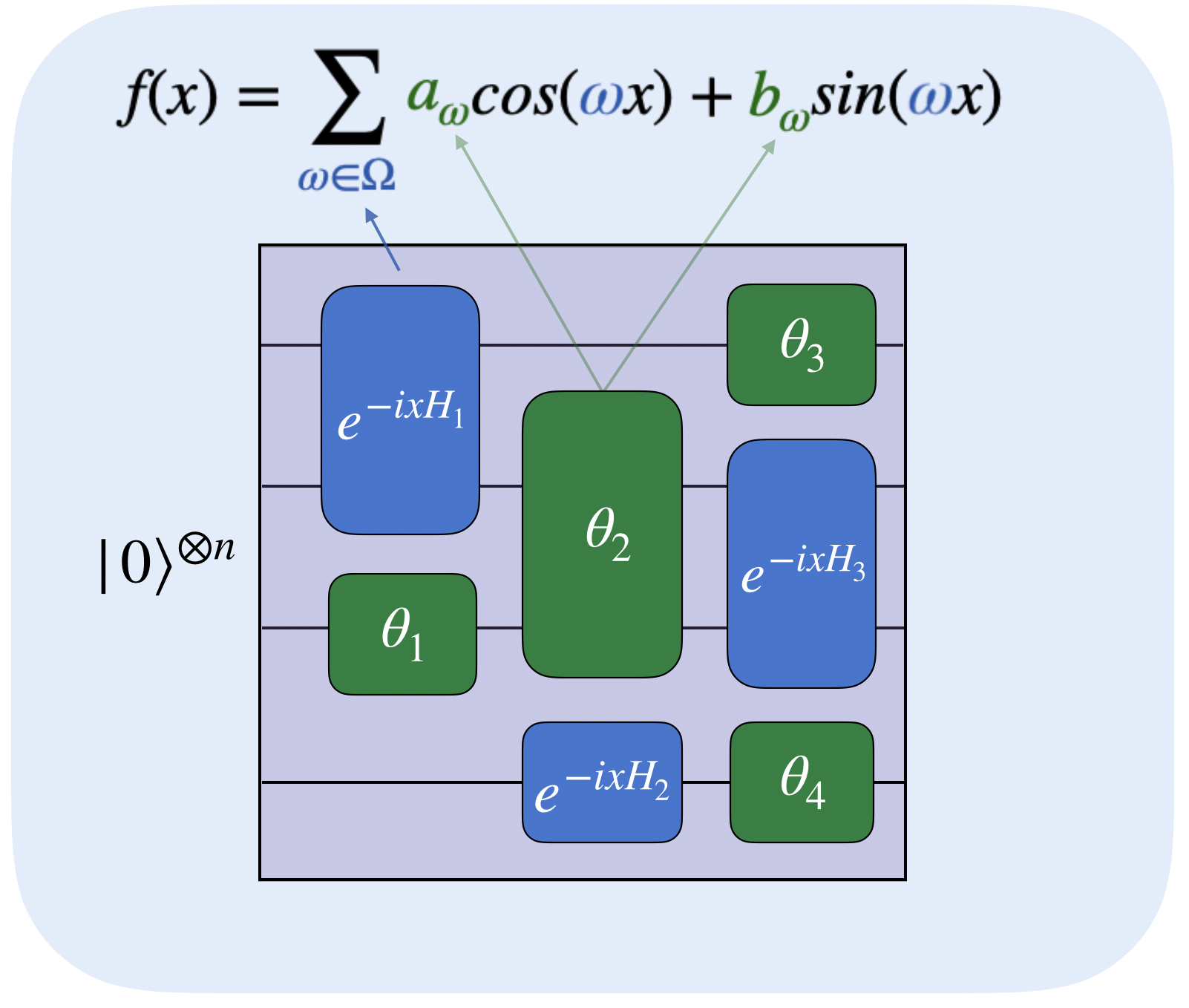}
    \caption{\textbf{Variational Quantum Circuits give rise to Fourier Series.} In a quantum machine learning task, classical data is encoded in a subset of variational gates of a quantum circuit (green), while blue gates are trainable.}
    \label{fig:VQC_explained}
\end{figure}

We can also define the feature map of the quantum model \cite{schuld2021supervised} as 
\begin{equation}
f(x;\theta) = \braket{\psi(x;\theta)|O|\psi(x;\theta)} = \w(\theta)^T\phi(x)
\end{equation}

where $\phi(x)$ is the \emph{feature vector}, the mapping of the initial input into a larger \emph{feature space}, where the new distribution of the data is supposed to make the classification (or regression) solvable with only a linear model. This linear model is in fact the inner product between $\phi(x)$ and a trainable \emph{weight vector} $\w$. In the case of VQCs, we can explicitly express them as:

\begin{equation}
\label{eq:phi_x_definition}
    \phi(x) = \frac{1}{\sqrt{|\Omega|}}
\begin{bmatrix}
  cos(\omega^T x)\\
  sin(\omega^T x)\\
  \vdots\\
\end{bmatrix}_{\omega \in \Omega}, \quad  
\w(\theta) = 
\begin{bmatrix}
  a_\omega\\
  b_\omega\\
  \vdots\\
\end{bmatrix}_{\omega \in \Omega}
\end{equation}

If the spectrum $\Omega$ is known and accessible, one can fit the quantum model by retrieving the coefficients $a_\omega, b_\omega$ associated to each frequency $\omega$. This can be done by using general linear ridge regression techniques (see Appendix \ref{app:lrr}). Interestingly, there exists a dual formulation of the linear ridge regression that depends entirely on the kernel function associated to the model\,\cite{bishop2006pattern}. In our case, the related kernel function is defined by: 
\begin{equation}
\label{eq:kernel_VQC_definition}
\begin{split}
    k(x,x') &= \braket{\phi(x),\phi(x')} \\
&= \frac{1}{|\Omega|}\sum_{\omega \in \Omega}\cos(\omega x)\cos(\omega x') + \sin(\omega x)\sin(\omega x') \\
 &= \frac{1}{|\Omega|}\sum_{\omega \in \Omega}\cos(\omega (x-x'))
\end{split}
\end{equation}

which is a shift-invariant kernel, meaning that $k(x,x')=k(x-x')$. 

It is known that quantum models from VQCs are equivalent to kernel methods\,\cite{schuld2021supervised}, which means that it is equally possible to fit the quantum model by approximating the related kernel function. These kernels are high dimensional (since the frequencies in $\Omega$ can be numerous) which makes it hard to simulate classically in practice. But due to their shift-invariance, we propose to study their classical approximation using Random Fourier Features (RFF), a seminal method known to be powerful approximator of high-dimensional kernels\,\cite{Rahimi2009}.





\section{Random Fourier Features approximates high-dimensional kernels}
\label{sec:classical_RFF}
In this section, we explain the key results of the classical method called Random Fourier Features (RFF) \cite{Rahimi2009, li2019towards, sutherland2015error}. We will use this method to create several classical sampling algorithms for approximating VQCs.\\

Let  $\mathcal{X} \subset \mathbb{R}^d$ be a compact domain and $k : \mathcal{X} \times \mathcal{X} \longrightarrow \mathbb{R}$ be a kernel function. We assume $k$ is shift invariant, meaning 
\begin{equation}
\label{eq:shift_invariant_kernel}
    k(x, y) = \overline{k}(x-y) = \overline{k}(\delta)
\end{equation}
where $\overline{k} : \mathcal{X} \longrightarrow \mathbb{R}$ is a single variable function, and we will note $\overline{k} = k$ to simplify the notation.

Bochner's theorem\,\cite{rudin2017fourier} insures that the Fourier transform of $k$ is a positive function and we can write

\begin{gather}
    k(\delta) = \Int_{\omega \in \mathcal{X}} p(\omega) e^{-i\omega^T\delta}d\omega
\end{gather}

If we assume $k$ is also normalized, then the Fourier transform $p(\omega)$ of $k$ can be assimilated to a probability distribution. With a dataset of $M$ points, fitting a Kernel Ridge Regression (KRR) model with the kernel $k$ necessitates $M^2$ operations to compute the kernel matrix and $\mathcal{O}(M^3)$ to invert it. This becomes impractical when $M$ reaches high value in modern big datasets.

The idea of the Random Fourier Feature method \cite{Rahimi2009} is to approximate the kernel $k$ by 
\begin{equation}
\label{eq:simeq_RFF}
    \tilde{k}(x, y) \simeq \tilde{\phi}(y)^T\tilde{\phi}(x)
\end{equation}
where $\tilde{\phi}(x) = \frac{1}{\sqrt{D}}
\begin{bmatrix}
cos(\omega_i^T x) \\ sin(\omega_i^T x) 
\end{bmatrix}_{i\in \llbracket 1, D \rrbracket}$. where the $\omega_i$s are $D$ frequencies sampled iid from the frequency distribution $p(\omega)$. Formally, it is a Monte-Carlo estimate of $k$. Note that $p(\omega)$ can be analytically found in some cases such as Gaussian or Cauchy kernel \cite{Rahimi2009}.


 Then instead of fitting a KRR for $k$, one will solve a Linear Ridge Regression (LRR) with $\phi$ (see details in Appendix \ref{app:lrr}). 
 The two problems are equivalent \cite{bishop2006pattern}, and the number of operations needed for the LRR is $\mathcal{O}(MD^2 + D^3)$. If D is much smaller than $N$, it is much cheaper than solving the KRR directly. Even if D is so big that the linear regression cannot be exactly solved, one can employ stochastic gradient descent or adaptive momentum optimizers such as Adam \cite{kingma2014adam}. 
 
 The output of the LRR or gradient descent is simply a weight vector $\tilde{\w}$ that is used to create the approximate function
 
 \begin{equation}
     \tilde{f} = \tilde{\w}^T\tilde{\phi}(x)
 \end{equation}

\section{RFF Methods for Approximating VQCs}
\label{sec:method_and_sampling}

In this section, we present in details our solutions to approximate a VQC using classical methods. The intuitive idea is to sample some frequencies from the VQC's frequency domain $\Omega$, and train a classical model from them, using the RFF methods from Section \ref{sec:classical_RFF}. We first introduce some related work (\ref{sec:related_work_jenseisert}), then present three different strategies to sample those frequencies to build classical models (\ref{sec:3sampling_strategies_RFF}). 
We finally provide theoretical bounds on the number of samples required (\ref{sec:RFF_error_num_samples}) and present potential limitations to our methods (\ref{sec:limitations_QuantumRFF}), opening the way for VQCs with strong advantage over classical methods.

\subsection{Related Work}
\label{sec:related_work_jenseisert}


A recent work \cite{schreiber2022classical} independently proposed a similar approach where classical surrogate methods approximate VQCs. The difference with this work is the necessity of having access to all $\Omega$, the totality of the frequencies of the VQC considered, without sampling from them. 

Indeed, if $\Omega$ is known, the coefficients $a_\omega$ and $b_\omega$ of the VQC function (see  Eq.\ref{eq:VQC_function_analytic}) can be easily fitted by solving the classical least square problem. Namely, one determines $\w^*$ such that
\begin{gather}\label{eq:least_square}
    \w^* = \argmin_{\w} \frac{1}{M}\Sum_{i=1}^M|\w^T\phi(x_i) - y_i|^2 + \frac{\lambda_0}{M}||\w||^2
\end{gather}
where  $\phi(x) = \begin{bmatrix}cos(\omega x) \\ sin(\omega x)  \end{bmatrix}_{\omega \in \Omega}$, and $\lambda_0$ is the regularisation parameter. As explained in the previous section, with a dataset of $M$ points, this can be solved exactly using matrix inversion in $\mathcal{O}(M|\Omega|^2+|\Omega|^3)$ operations if $M\geq 2|\Omega|$. If the inequality is not fulfilled or if $|\Omega|^3$ is too big, one would use stochastic gradient descent instead of matrix inversion.

However, this method assumes that $\Omega$ is known, and is not too large, which will usually be the case as we shown in Section \ref{sec:quantum_models_are_fourier}. One should also be able to enumerate all individual frequencies $\omega \in \Omega$. Moreover, as we will show the redundancy of some frequencies in $\Omega$ has a key importance, which is not captured by such a method.

For completeness, we note from the seminal work \cite{schuld2021supervised} that the author briefly mentions the idea of approximating kernels with RFF. Similarly, in a more recent work \cite{peters2022generalization}, the authors mention RFF as a sampling strategy on VQCs with shift invariant kernels, without further details. 


\subsection{RFF Sampling strategies}
\label{sec:3sampling_strategies_RFF}

We now propose 3 types of sampling strategies of the spectrum $\Omega$. We will explain in which case these strategies are suited, according to the type of the encoding circuit, the dimension of the input vectors, the number of training points, and so on.

As shown in Eq.\ref{eq:phi_x_definition} and \ref{eq:kernel_VQC_definition}, the corresponding kernel $k$ of the VQC is built from frequencies $\omega \in \Omega$. In section \ref{sec:quantum_models_are_shifinvariantkernel}, we have shown that this kernel is shift-invariant, which ensures us the efficiency of RFF to approximate them. Applying the RFF method would consist in sampling from $\Omega$ and reconstructing a classical model which should approximate the VQC output function $f$.




\subsubsection{RFF with Distinct sampling}
This strategy describe the basic approach of using RFF for approximating VQCs. We assume the ability to sample from $\Omega$. It is straightforwardly following the method described in Section \ref{sec:classical_RFF}, and given in Algorithm \ref{alg:distinct_sampling}.

The benefit of this method, compared to the one presented in the previous Section \ref{sec:related_work_jenseisert}, is the ability to use far fewer frequencies than the actual size of $\Omega$. As shown in Section \ref{sec:RFF_error_num_samples}, one might require a number $D$ of samples scaling linearly with the dimension of the input $d$, and logarithmically with the size of $\Omega$.

\begin{algorithm}[H]
\caption{RFF with Distinct Sampling}
\label{alg:distinct_sampling}
\begin{algorithmic}[0]
\Require a VQC model $f$, and $M$ datapoints $\{x_j\}_{j\in[M]}$
\Ensure Approximate function $\tilde{f}$
\State Diagonalize the Hamiltonians of the VQC's encoding gates.
\State Use their eigenvalues to obtain all frequencies $\omega \in \Omega$, as in Eq.\ref{eq:Omega_definition_one_dimension}
\State Sample $D$ frequencies $(\omega_1,\cdots,\omega_D)$ from $\Omega$
\State Construct the approximated kernel $\tilde{k}(x,y)=\tilde{\phi}(y)^T\tilde{\phi}(x)$ with $\tilde{\phi}(x) = \frac{1}{\sqrt{D}}\begin{bmatrix}cos(\omega_i^T x) \\ sin(\omega_i^T x)  \end{bmatrix}_{i\in \llbracket 1, D \rrbracket}$
\State Solve the Linear Ridge Regression problem (Appendix \ref{app:lrr}), by matrix inversion or stochastic gradient descent, obtain a weight vector $\tilde{\w}$.
\State Obtain the approximated function $\tilde{f}(x) = \tilde{\w}^T\tilde{\phi(x)}$
\end{algorithmic}
\end{algorithm}

\subsubsection{RFF with Tree sampling}
\label{sec:tree_sampling}

The abovementioned method requires constructing explicitly $\Omega$, which can become exponentially large if the dimension $d$ of the datapoints is high (see Section \ref{sec:quantum_models_are_fourier}). The size of Omega is also increased if the encoding Hamiltonians have many eigenvalues (when the Hamiltonian is complex and acts on many qubits), or if many encoding gates are used. A large $\Omega$ is indeed the main interest of VQCs in the first place, promising a large expressivity. 

In some cases, and in particular for a VQC using many Pauli encoding gates, a lot of redundancy occurs in the final frequencies. Indeed, if many eigenvalues are equal, the tree leaves will become redundant. Very small eigenvalues can also create groups of frequencies extremely close to each other, which in some use cases, when tuning their coefficients $a_\omega$ and $b_\omega$, can be considered as redundancy. In our numerical experiments (see Fig.\ref{fig:vqc_random_pauli_1d}), we observe an interesting phenomenon: On average, the frequencies with the more redundancy tend to obtain larger coefficients. Conversely, isolated frequencies are very likely to have small coefficients in comparison, making them "ghost" frequencies in $\Omega$. For VQC with solely Pauli encoding gates, we observe that the coefficients of high frequencies are almost stuck to zero during training. Therefore, one can argue that the \emph{Distinct Sampling} described above can reach even more frequencies than the corresponding VQC. However, if one wants to closely approximate a given VQC with RFF, one would not want to sample such isolated frequencies from $\Omega$ but instead draw with more probability the redundant frequencies.

\begin{algorithm}[H]
\caption{RFF with Tree Sampling}\label{alg:tree_sampling}
\begin{algorithmic}[1]
\Require a VQC model $f$, and $M$ datapoints $\{x_j\}_{j\in[M]}$
\Ensure Approximate function $\tilde{f}$
\State Diagonalize the Hamiltonians of the VQC's encoding gates.
\State Sample $D$ paths from the tree shown in Fig.\ref{fig:tree_sampling}, obtain $D$ frequencies $(\omega_1,\cdots,\omega_D)$ from $\Omega$
\State Follow steps 4-6 of Algorithm \ref{alg:distinct_sampling}.
\end{algorithmic}
\end{algorithm}

This is what we try to achieve with \emph{Tree Sampling}. Knowing the eigenvalue decomposition of each encoding's Hamiltonian, we propose to directly sample from the tree shown in Fig.\ref{fig:tree_sampling}. The first advantage of this method is that it does not require computing the whole set $\Omega$, but only draw $D$ paths through the tree (which can be used to generate up to $\binom{D}{2}+1$ positive frequencies, with potential redundancy). Second, it naturally tends to sample more frequencies that are redundant, and therefore more key to approximate the VQC's function. Overall, it could speed up the running time and necessitate fewer samples.

\subsubsection{RFF with Grid sampling}
The two above methods suffer from a common caveat: if one or more of the encoding Hamiltonians are hard to diagonalize, sampling the VQC's frequencies is not possible as it prevents us from building some of the branches of the tree shown in Fig.\ref{fig:tree_sampling}. 

Even in this case, we propose a method to approximate the VQC. If the frequencies are unknown, but one can guess an upper bound or their maximum value, we propose the following strategy: We create a grid of frequencies regularly disposed between zero and the upper bound $\omega_{max}$, on each dimension. In practice, if unknown, the value of $\omega_{max}$ can simply be the largest frequency learnable by the Shannon criterion (see Section \ref{sec:RFF_error_num_samples}) hence half the number of training points. 
Letting $s>0$ be the step on this grid, the number of frequencies on a single dimension is given by $\omega_{max}/s$. Over all dimensions, there are $\lceil(\omega_{max}/s)\rceil^d$ frequency vectors.  

Therefore, instead of sampling from actual frequencies in $\Omega$, one could sample blindly from this grid, hence the name \emph{Grid Sampling}. At first sight, it might seem ineffective, since none of the frequencies actually in $\Omega$ may be represented in the grid. But we show in the Appendix \ref{sec:gridsampling_proof} that the error between the VQC's model $f$ and the approximation $\tilde{f}$ coming from the grid can be bounded by $s$. When $s$ is small enough, the number $D$ of samples necessary to reach an error $\epsilon>0$ grows like $1/\epsilon^2log(1/s)$ which is surprisingly efficient. However, the trade-off comes from the fact that a small $s$ means a very large grid, in particular in high dimension.

\begin{algorithm}[H]
\caption{RFF with Grid Sampling}\label{alg:grid_sampling}
\begin{algorithmic}[1]
\Require Assumption on the highest frequency $\omega_{max}$, a step $s>0$ and $M$ datapoints $\{x_j\}_{j\in[M]}$
\Ensure Approximate function $\tilde{f}$
\State Create a regular grid in $[0,\omega_{max}]^d$ with step $s$.
\State Sample $D$ frequencies $(\omega_1,\cdots,\omega_D)$ from the grid.
\State Follow steps 4-6 of Algorithm \ref{alg:distinct_sampling}.
\end{algorithmic}
\end{algorithm}

\subsection{Number of Samples and Approximation Error}
\label{sec:RFF_error_num_samples}

In \cite{sutherland2015error}, authors bound the resulting approximation error in the RFF method. We provide the main theorems in the Appendix \ref{app:rff_bounds}. In Theorem \ref{thm:approx_krr}, we consider the error between the final functions $f$ and $\tilde{f}$. We see that if the error must be constrained such that $|f(x)-\tilde{f}(x)|\leq \epsilon$, for $\epsilon>0$, one can derive a lower bound on the number $D$ of samples necessary for the approximation. Fortunately, the bound on $D$ grows linearly with the input dimension $d$, and logarithmically with $\sigma_p$ which linked to the variance of the frequency distribution $p(\omega)$.

In our case, the continuous distribution $p(\omega)$ will be replaced by the actual set of frequencies $\Omega$, 
\begin{equation}
    p(\omega) = \frac{1}{|\Omega|}\Sum_{\omega \in \Omega} \delta_\omega
\end{equation}
where $\delta_\omega$ represents the Dirac distribution at $\omega$. As a result, we can write the discretized variance as
\begin{equation}
    \sigma_p=\Sum_{\omega \in \Omega} p(\omega) \omega^T\omega
\end{equation} 
From this, we want to know the link between the number $D$ of samples necessary and the size of $\Omega$, or even the number $L$ of encoding gates per dimension.

In the general case, we consider that $\sigma_p$ is the average value of $\omega^T\omega$, that is to say the trace of the multidimensional variance. The more the frequencies are spread, the higher $\sigma_p$ will be, but the number of frequencies in itself doesn't seem to play the key role here.

Finally, note that it is important to take into account the Shannon criterion, stating that one needs at least $2\omega_{\max}$ training points to estimate the coefficients of a Fourier series of maximum frequency $\omega_{\max}$.
In practice, it puts some limitation on the largest frequency one can expect to learn (both classically and quantumly) given an input dataset. Exponentially large frequencies with VQCs, as in \cite{shin2022exponential} would have a limited interest with subexponential number of training points.
The efficiency of RFF against VQCs in such cases becomes even more interesting, as it allows reducing the number $D$ of sample compared to the actual exponential size of $\Omega$.


\subsubsection{Pauli encoding}
We provide here a bound on the minimum of samples required to achieve a certain error between the RFF model and the complete model in the case of Pauli encoding in the distinct sampling strategy. The proof and details for this theorem is shown in Appendix \ref{app:proof_pauli}.

\begin{theorem}
\label{thm:number_of_samples_pauli_encoding}
Let $\mathcal{X}$ be a compact set of $\mathbb{R}^d$, and $\epsilon > 0$. We consider a training set $\{(x_i, y_i)\}_{i=1}^M$. Let $f$ be a VQC model with $L$ encoding Pauli gates on each of the $d$ dimensions and full freedom on the associated frequency coefficients, trained with a regularization $\lambda$. Let $\sigma_y^2 = \frac{1}{M}\sum_{i=1}^M y_i^2$ and $|\mathcal{X}|$ the diameter of $\mathcal{X}$. Let $\tilde{f}$ be the RFF model with $D$ samples in the distinct sampling strategy trained on the same dataset and the same regularization. Then we can guarantee $|f(x) - \tilde{f}(x)| \leq \epsilon$ with probability $1 - \delta$ for a number $D$ of samples given by: 
\begin{equation}
\label{eq:theorem_samples_vs_error_Pauli}
    D = \Omega\Bigg(\frac{dC_1(1+\lambda)^2}{\lambda^4\epsilon^2}\bigg[log(dL^2 |\mathcal{X}|)
    + log\frac{C_2(1+\lambda)}{\epsilon\lambda^2} - log\delta\bigg]\Bigg)
\end{equation}
with $C_1$, $C_2$  being constants depending on $\sigma_y$, $|\mathcal{X}|$. We recall that in Eq.\ref{eq:theorem_samples_vs_error_Pauli} the notation $\Omega$ stands for the computational complexity "Big-$\Omega$" notation. 
\end{theorem}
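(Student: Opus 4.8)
The plan is to \emph{specialize} the general RFF function-approximation bound (Theorem~\ref{thm:approx_krr}) to the Pauli setting, rather than to reprove it from scratch. The only place where the choice of encoding enters that general bound is through the second moment $\sigma_p$ of the distribution from which the random frequencies are drawn; the dependence on $d$, $\epsilon$, $\lambda$ and $\delta$ is encoding-agnostic and is already furnished by the ridge-regression stability analysis behind Theorem~\ref{thm:approx_krr}. So the whole argument reduces to three steps: identify the sampling distribution for distinct sampling with Pauli gates, compute its $\sigma_p$, and substitute into the general bound while collecting constants.

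First I would fix the distribution. With $L$ Pauli encoding gates per dimension, Section~\ref{sec:quantum_models_are_fourier} shows that the distinct spectrum is $\Omega=\llbracket -L,L\rrbracket^d$, and the distinct-sampling strategy draws $\omega$ uniformly from it, $p(\omega)=\frac{1}{|\Omega|}\sum_{\omega\in\Omega}\delta_\omega$. Since the kernel~(\ref{eq:kernel_VQC_definition}) is normalized and shift-invariant, this discrete $p$ meets the hypotheses of Theorem~\ref{thm:approx_krr}. Next I would compute $\sigma_p=\sum_\omega p(\omega)\,\omega^T\omega=\mathbb{E}[\|\omega\|^2]$. Because the coordinates $\omega_\kappa$ are independent and each uniform on $\{-L,\dots,L\}$, we get $\mathbb{E}[\|\omega\|^2]=\sum_{\kappa=1}^d\mathbb{E}[\omega_\kappa^2]$, and the elementary sum $\sum_{j=-L}^{L}j^2=\frac{L(L+1)(2L+1)}{3}$ yields $\mathbb{E}[\omega_\kappa^2]=\frac{L(L+1)}{3}$, hence $\sigma_p=\frac{dL(L+1)}{3}=O(dL^2)$. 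This is the crucial observation of the theorem: although $|\Omega|=(2L+1)^d$ grows exponentially in $d$, the second moment $\sigma_p$ grows only polynomially, so it enters the sample bound only through $\log\sigma_p=O(\log(dL^2))$.

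Finally I would substitute. Theorem~\ref{thm:approx_krr} gives $D=O\!\big(\tfrac{d(1+\lambda)^2}{\lambda^4\epsilon^2}[\log(\sigma_p|\mathcal{X}|)+\log\tfrac{1}{\epsilon_k}-\log\delta]\big)$, where the required kernel error scales as $\epsilon_k\sim\frac{\lambda^2\epsilon}{(1+\lambda)\sigma_y}$, producing the factor $(1+\lambda)^2/\lambda^4$ and the term $\log\tfrac{1}{\epsilon_k}=\log\tfrac{C_2(1+\lambda)}{\epsilon\lambda^2}$ with $C_2\sim\sigma_y$. Replacing $\log(\sigma_p|\mathcal{X}|)$ by $\log\big(\tfrac{dL(L+1)}{3}|\mathcal{X}|\big)=O(\log(dL^2|\mathcal{X}|))$ reproduces exactly~(\ref{eq:theorem_samples_vs_error_Pauli}), with $C_1,C_2$ absorbing the remaining $\sigma_y$ and $|\mathcal{X}|$ factors.

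I expect the genuine obstacle to lie not in the $\sigma_p$ computation but in the kernel-to-function conversion packaged inside Theorem~\ref{thm:approx_krr}: propagating a uniform kernel error $\epsilon_k$ to the regression output $|f(x)-\tilde f(x)|$ requires controlling the perturbation of $(K+\lambda\Id)^{-1}$ and of the fitted weights, which is precisely where the delicate $(1+\lambda)^2/\lambda^4$ and $\sigma_y$ dependence originates. For the present specialization the only genuinely new technical points are verifying that the discrete Pauli distribution satisfies the finite-second-moment and normalization hypotheses of that theorem, and tracking constants carefully enough that the logarithmic term collapses to the clean form $\log(dL^2|\mathcal{X}|)$.
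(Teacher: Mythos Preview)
Your proposal is correct and follows essentially the same route as the paper: compute the second moment $\sigma_p$ of the uniform distribution on $\llbracket -L,L\rrbracket^d$, observe that it is $O(dL^2)$, and substitute into the general bound of Theorem~\ref{thm:approx_krr}. Your derivation of $\sigma_p$ via coordinate independence is in fact slightly cleaner than the paper's counting argument, and your explicit identification of how the $(1+\lambda)^2/\lambda^4$ and $C_1,C_2$ factors arise from the kernel-to-function conversion is more detailed than what the paper records.
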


We can conclude that the number $D$ of samples grows linearly with the dimension $d$, and logarithmically with the size of $\Omega$. It means that even though the number of frequencies in the spectrum of the quantum circuit is high, only a few of them are useful in the model. This fact limits the quantum advantage of such circuits. However, the scaling in $\epsilon$ and $\lambda$, respectively in $\Omega(1/\epsilon^2)$ and $\Omega(1/\lambda^4)$ is not favorable, and can limit in practice the use of the RFF method.

 One may think at first sight that the RFF method would be efficient to approximate the outputs of any VQC, or find a "classical surrogate"\,\cite{schreiber2022classical}. Instead, the bound that is provided is on the error between a VQC trained on an independent data source and a RFF model trained on the same data. There is contained in this result the potential incompleteness of the dataset to render an accurate representation of the underlying data distribution. If the dataset fails to correctly represent the data distribution, then the VQC will fail to correctly model it, and the theorem provide the minimal number of samples to perform "as badly", and this number could be low.

This was tested in the numerical simulations in Section \ref{sec:exp_numerics}. However, we recall that these are bounds that we cannot reach in practice with the current classical and quantum computing resources. They give an intuition on the asymptotic scaling as quantum circuits become larger.

Note that this proof changes if we take into account the redundancy of each frequency  when sampling from $\Omega$. This will be the case in the \emph{Tree Sampling} strategy (see Section \ref{sec:tree_sampling}). In that case, the variance becomes even smaller since some frequencies are more weighted than others, in particular for Pauli encoding.

\subsubsection{Grid sampling}
We provide here a bound on the minimum number of samples required to achieve a certain error between the RFF model and the complete model in the case of a general encoding in the gird sampling strategy. The proof and details for this theorem is shown in Appendix \ref{sec:gridsampling_proof}. 

\begin{theorem}
\label{thm:number_of_samples_grid_sampling}
Let $\mathcal{X}$ be a compact set of $\mathbb{R}^d$, and $\epsilon > 0$. We consider a training set $\{(x_i, y_i)\}_{i=1}^M$. Let $f$ be a VQC model with any hamiltonian encoding, with a maximum individual frequency $\omega_{max}$ and full freedom on the associated frequency coefficients, trained with a regularization $\lambda$. Let $\sigma_y^2 = \frac{1}{M}\sum_{i=1}^M y_i^2$ and $|\mathcal{X}|$ the diameter of $\mathcal{X}$. Let $\tilde{f}$ be the RFF model with $D$ samples in the grid strategy trained on the same dataset and the same regularization. Let $C = |f|_\infty |\mathcal{X}|$ and $s$ the sampling rate defined in the grid sampling strategy. Then we can guarantee $|f(x) - \tilde{f}(x)| \leq \epsilon$ for $0 <s<\frac{1}{C}$  with probability $1 - \delta$ for a number $D$ of samples given by: 
\begin{equation}\label{eq:theorem_samples_vs_error_Grid}
\begin{split}
    D = \Omega\Bigg(\frac{dC_1(1+\lambda)}{\lambda^4(\epsilon - sC)^2}\bigg[log(\omega_{max} |\mathcal{X}|)\\
    + log\frac{C_2(1+\lambda)}{\lambda^2(\epsilon - sC)} - log\delta\bigg]\Bigg)
\end{split}
\end{equation}
with $C_1$ and $C_2$ being constants depending on $\sigma_y$ and $d(X)$. We recall that in Eq.\ref{eq:theorem_samples_vs_error_Grid} the notation $\Omega$ stands for the computational complexity "Big-$\Omega$" notation. 
\end{theorem}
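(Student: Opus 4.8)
The plan is to bound $|f(x)-\tilde f(x)|$ by splitting the discrepancy into a deterministic \emph{discretization} error, arising because the grid frequencies need not coincide with the true spectrum $\Omega$, and a random \emph{Monte-Carlo} error, arising because only $D$ of the grid frequencies are sampled. Introducing the idealized predictor $f_{\mathrm{grid}}$ obtained from the \emph{full} grid kernel $k_{\mathrm{grid}}$, the triangle inequality gives
\begin{equation}
|f(x)-\tilde f(x)| \leq |f(x)-f_{\mathrm{grid}}(x)| + |f_{\mathrm{grid}}(x)-\tilde f(x)|.
\end{equation}
I would allocate the budget $sC$ to the first term and the remaining $\epsilon-sC$ to the second; this is precisely why the hypothesis $0<s<1/C$ (together with the implicit requirement $sC<\epsilon$) is imposed, since it keeps the Monte-Carlo budget positive and explains the appearance of $\epsilon-sC$ in the final bound.

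For the discretization term I would replace each true frequency $\omega\in\Omega$ by its nearest grid point $\hat\omega$, so that $\|\omega-\hat\omega\|\leq s$ by construction of the regular grid. Since $\cos$ and $\sin$ are $1$-Lipschitz and $\|x\|\leq|\mathcal{X}|$ on the compact domain, a mean-value argument yields $|\cos(\omega^T x)-\cos(\hat\omega^T x)|\leq s|\mathcal{X}|$ and likewise for $\sin$. Propagating this through the Fourier representation of $f$ in Eq.\ref{eq:VQC_function_analytic} (equivalently, bounding the kernel perturbation $\|k-k_{\mathrm{grid}}\|_\infty \leq s|\mathcal{X}|$ and invoking the stability of ridge regression in the regularization $\lambda$) gives $|f-f_{\mathrm{grid}}|\leq sC$ with $C=|f|_\infty|\mathcal{X}|$.

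For the Monte-Carlo term I would reduce directly to the general RFF error bound recalled in Appendix~\ref{app:rff_bounds} (Theorem~\ref{thm:approx_krr}), applied now to the uniform distribution $p$ over the grid rather than to $\Omega$. The only input that changes relative to the Pauli analysis of Theorem~\ref{thm:number_of_samples_pauli_encoding} is the spectral spread $\sigma_p=\mathbb{E}_{\hat\omega\sim p}[\hat\omega^T\hat\omega]$: since every grid point lies in $[0,\omega_{max}]^d$ we have $\sigma_p\leq d\,\omega_{max}^2$, so the factor $\log(dL^2|\mathcal{X}|)$ is replaced by $\log(\omega_{max}|\mathcal{X}|)$. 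Demanding $|f_{\mathrm{grid}}-\tilde f|\leq \epsilon-sC$ with probability $1-\delta$ then produces the sample complexity of the Pauli case with $\epsilon$ replaced by $\epsilon-sC$, yielding the stated expression (the residual $(1+\lambda)$ powers being bookkeeping from the ridge-stability constants).

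The hard part will be the discretization step, namely making the passage from the pointwise frequency perturbation $\|\omega-\hat\omega\|\leq s$ to a clean bound on the \emph{trained} predictors rigorous. Controlling $\|k-k_{\mathrm{grid}}\|_\infty$ is immediate, but converting this kernel perturbation into the predictor gap $|f-f_{\mathrm{grid}}|$ requires a ridge-regression stability estimate whose constant must be tracked to obtain exactly $C=|f|_\infty|\mathcal{X}|$, and in particular to justify absorbing the $\ell_1$ norm of the Fourier coefficients into the sup norm $|f|_\infty$. By contrast, the Monte-Carlo half is essentially a direct invocation of the already-established RFF bound, so I expect the discretization estimate to carry all the novelty of the proof.
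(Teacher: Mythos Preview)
Your proposal is correct and follows essentially the same route as the paper: a triangle-inequality split into a discretization term bounded by $sC$ (via replacing each $\omega$ by its nearest grid point and a Lipschitz/trigonometric estimate) and a Monte-Carlo term handled by Theorem~\ref{thm:approx_krr} with budget $\epsilon - sC$ and $\sigma_p = O(d\,\omega_{max}^2)$ for the uniform grid distribution. The only cosmetic difference is that the paper obtains the discretization bound directly on the Fourier series via the sum-to-product identity $\cos(\omega^T x)-\cos(\hat\omega^T x)=-2\sin\!\big(\tfrac{(\omega-\hat\omega)^T x}{2}\big)\sin\!\big(\tfrac{(\omega+\hat\omega)^T x}{2}\big)$, which immediately produces $s|\mathcal{X}|\sum_\omega(|a_\omega|+|b_\omega|)\leq s|\mathcal{X}||f|_\infty$, so the ridge-stability detour you flag as the ``hard part'' is in fact bypassed.
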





\subsection{Limitations of RFF for Approximating VQCs}
\label{sec:limitations_QuantumRFF}

We now have seen the theoretical power of Random Fourier Features and three different adaptations to approximate VQCs in practice. Since many parameters are to be taken into account (size and structure of $\Omega$, number of qubits, circuit depth, number of training points, input dimension,  encoding Hamiltonians, etc.), it is natural to ask ourselves which of the three strategy is recommended given a use case, and are there any use cases for which none of them work.

As seen in Section \ref{sec:RFF_error_num_samples}, we know the lower bound on the number of samples to draw in RFF, to reach a specific error. This bound grows linearly with the input dimension $d$, and logarithmically with the size of $\Omega$ (itself depending exponentially on $L^d$). Nonetheless, in practice, we could see that very large spectrum are hard to approximate, simply because it would require much more samples. This scaling will be judged once such VQCs will be actually implemented on large enough quantum computers (with enough qubits and/or long coherence). 

The size of $\Omega$ increases as well when the encoding Hamiltonians have distinct eigenvalues and are acting on many qubits. Therefore, quantum computers allowing for many qubits and various high locality Hamiltonians would be a plus for enlarging the spectrum.

As the Hamiltonians become larger and their eigenvalues complex, we could reach a limit where it becomes impossible to diagonalize them. In such a case, without sampling access to $\Omega$, the \emph{Distinct} and \emph{Tree} sampling strategies would be unavailable. The \emph{Grid} sampling scheme would suffice until suffering from the high dimensionality or other factors detailed above.

Another important limitation of the theoretical bounds is the scaling in the regularization parameter. As illustrated in figure \ref{fig:impact_D_and_L}, this constant prefactor make the bound still reach intractable numbers, so it is of little value to determine \textit{a priori} if the RFF method will be efficient.

Finally, having a small dataset would limit the trainability of our classical RFF methods. Note that this would probably constrain the training of the VQC as well.

Overall, some limits for our classical methods can be guessed and observed already, but the main ones remain to be measured on real and larger scale quantum computers. We leave this research for future work. On another hand, one could want to understand better the relation between the available frequencies and their amplitude in practice, to find potential singularities that could help, or not, the VQCs.\\

\begin{figure*}[t!]
    \centering
    \includegraphics[width=0.6\textwidth]{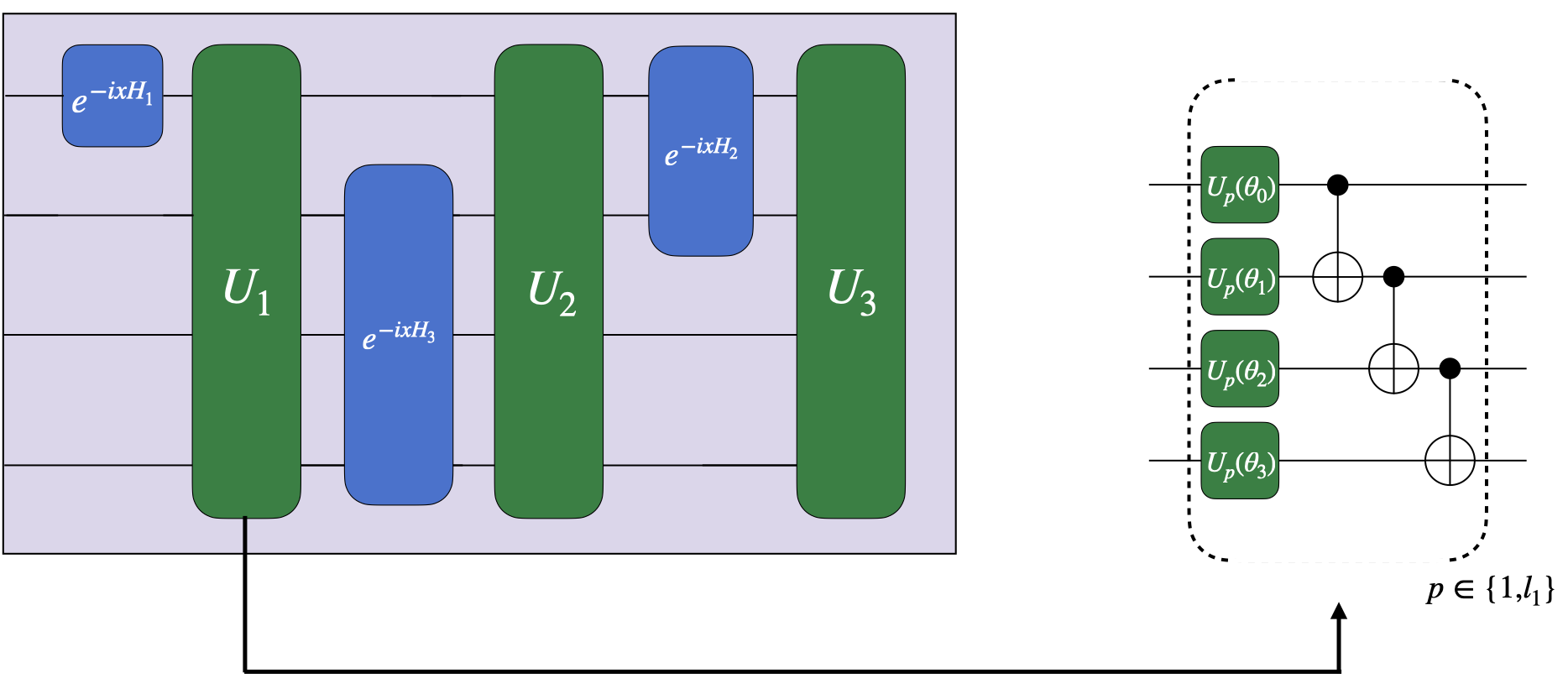}
    \caption{\textbf{Random instance of a VQC.} In this example, three encoding Hamiltonians $\{H_1, H_2, H_3 \}$ are randomly assigned over four qubits, and load a 1-dimensional vector $x$. Following each encoding gate $H_i$, an ansatz with trainable parameters and a ladder of CNOTs is applied, $l_i$ times in a row.}
    \label{fig:vqc_random_architecture}
\end{figure*}

Finally, we want to insist on the fact that the assumptions on VQCs are crucial on the whole construction that we propose, and that some of them could be questioned, especially concerning the encoding. For instance, when encoding vectors $x=(x_1,\cdots,x_d)$, not having encoding gates expressed as $exp(-x_iH)$ could potentially change the expression of $f(x;\theta)$ (Eq.\ref{eq:f_as_fourier_serie_complex}) and therefore could change the fact that the associated kernel would be easily expressed as a Fourier series, with shift-invariance. For instance, in \cite{kyriienko2021solving}, the authors use $exp(-arcsin(x_i)H)$ to encode data, resulting in $f$ being expressed in the Chebyshev basis instead of the Fourier one. More generally, understanding what happens with encodings of the form $exp(-g(x_i)H)$, and whether we can still use our classical approximation methods, remain an open question. Similar questions arise if we use simultaneous components encoding $exp(-x_ix_jH)$, or other alternative schemes.


\section{Experiments and Numerics}
\label{sec:exp_numerics}



In this section, we aim to assess the accuracy and efficiency of our classical methods to approximate VQCs in practice. We have conducted three types of simulations. In Section \ref{sec:exp_rff_on_vqc_random}, we instantiate random VQCs and try to mimic their output using our RFF methods. In Section \ref{sec:exp_rff_on_artificial_functions}, we create artificial functions and compare VQCs and RFF on the same task of learning it. In Section \ref{sec:exp_rff_real_datasets}, we compare VQCs and RFF on real datasets. Finally, we observe the scaling of our methods in Section \ref{sec:exp_RFF_scaling_with_Omega}.

As shown in Fig.\ref{fig:vqc_random_architecture}, a typical random VQC instance is built from a list of general encoding Hamiltonians $\{H_1, \cdots, H_k \}$, applied to randomly selected qubits according to their locality. The number of qubits is fixed to 5 in the following experiments (Note that the number of qubits has no impact on the expressivity a priori, it will only influence the spans of coefficients).

\subsection{Using RFF to Mimic Random VQCs}
\label{sec:exp_rff_on_vqc_random}

In a first stage, we focus on approximating random VQCs (i.e. with random parameter initialization) using Random Fourier features.
To this end, we fix the quantum spectrum $\Omega$ by making a certain choice about the structure of the encoding gates. We prepare the training dataset $\{X_{grid}, Y_{grid} \}$ with $X_{grid}$ being a set of $d$-dimensional data points spaced uniformly on the interval 
$\prod_{i=1}^d [0,x_{max_i}]$ and $Y_{grid}$ the evaluation of the quantum circuit on the input dataset $X_{grid}$. We, then, observe and evaluate the performance of our three RFF strategies (\emph{Distinct}, \emph{Tree}, and \emph{Grid} sampling, see Section \ref{sec:3sampling_strategies_RFF}) to approximate the quantum model. For completeness, we have tested our methods on different types of VQCs: some with basic Pauli encoding in 1 dimension (Fig.\ref{fig:vqc_random_pauli_1d}), in higher dimension (Fig.\ref{fig:vqc_random_pauli_d}), some with more complex Hamiltonians (Fig.\ref{fig:vqc_random_hamiltonian_complex}), and with scaled Pauli encoding as in \cite{shin2022exponential} (Fig.\ref{fig:vqc_random_scaled_pauli_exp}). For each type, we have also observed the actual Fourier Transform of the random VQCs model on average, to understand which frequencies appear more frequently in their spectrum.

\begin{figure}[h!]
\centering
\includegraphics[width=0.95\linewidth]{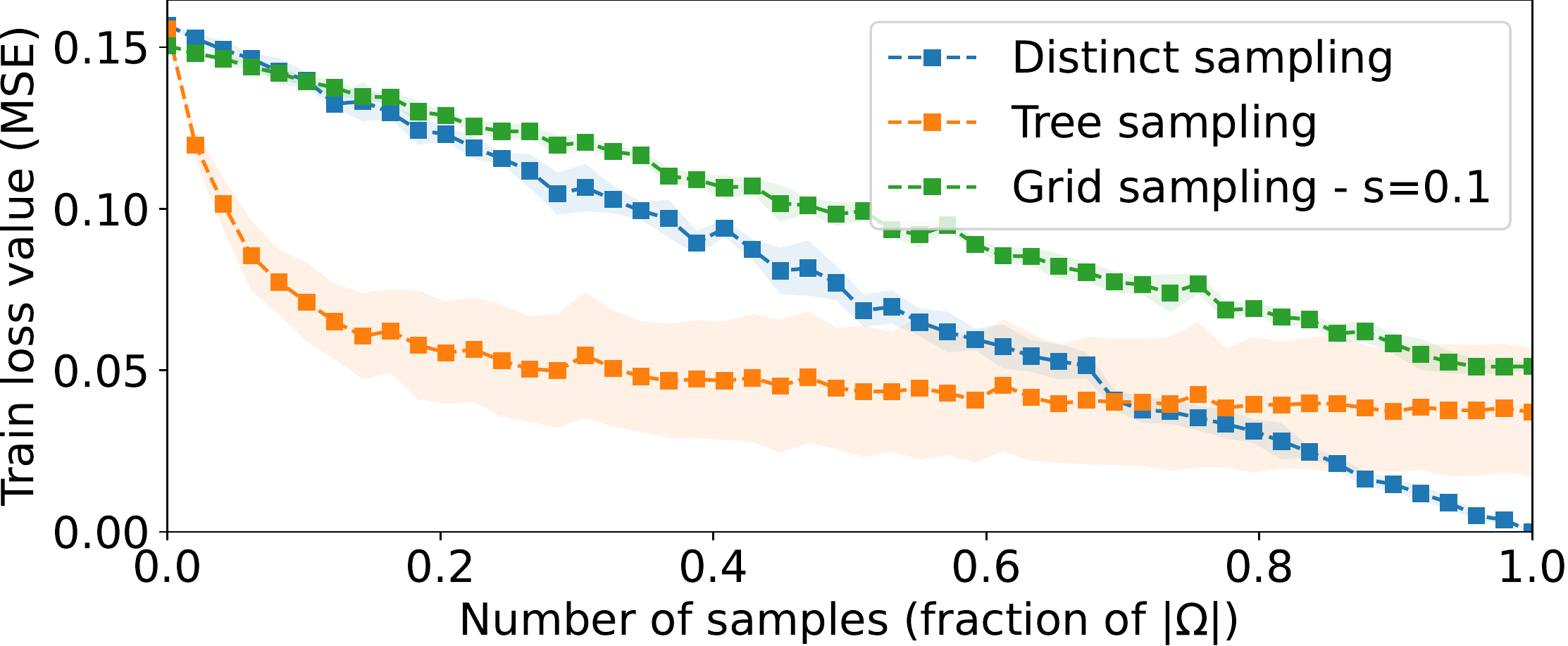}
\caption{RFF performance for $L=5, d=4$, to approximate random VQCs with Pauli encoding.}
\label{fig:vqc_random_pauli_d}
\end{figure}


\begin{figure*}[t!]
  \centering
  \subcaptionbox{Average Fourier Transform of the VQC's quantum models. The frequencies with high coefficients are the ones with high redundancy in $\Omega$ (seen in the inner red histogram). Frequencies over $100$ have negligible coefficients and redundancy, and therefore are not shown.}[.45\textwidth][c]{%
    \includegraphics[width=0.99\linewidth]{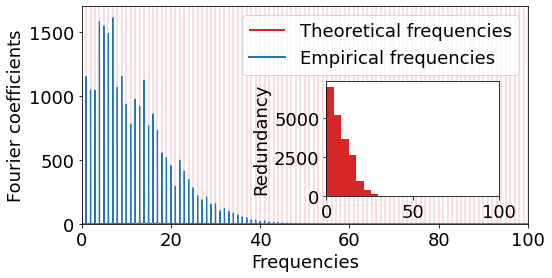}}
    \label{fig:ft_200}
    \quad
  \subcaptionbox{Evolution of RFF train loss as a function of the relative number of frequencies sampled. The \emph{Tree} sampling strategy takes advantage of the high redundancy to sample less frequencies to reach a good approximation.}[.45\textwidth][c]{%
    \includegraphics[width=0.99\linewidth]{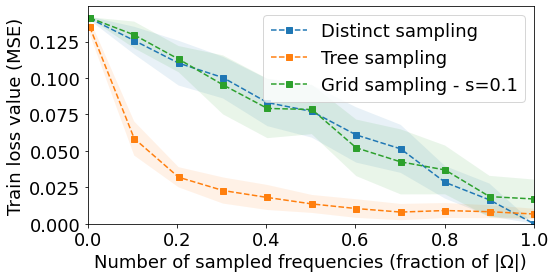}}
    \label{fig:rff_200}
\caption{Random 1d VQCs with L=200 Pauli encoding gates, averaged over 10 different random initialization.}
\label{fig:vqc_random_pauli_1d}    
\end{figure*}

We note that the number of data points in $X_{grid}$ needed to efficiently learn the quantum function is $N>\prod_{i=1}^d \frac{x_{max_i} w_{max_i}}{\pi} $. This choice is basically related to the \textit{Shannon criterion} for effective sampling in order to reconstruct the full function covering all of its frequencies. Moreover, for the solution to be unique and hence for the least square problem introduced in Eq.\ref{eq:least_square} to be well defined, we better choose $N$ to be bigger than the number of features in the regression problem (these two criteria coincide in the case of Pauli encoding).

\subsubsection{Pauli encoding}
\label{numerics_a}
We first consider a quantum model with $L$ Pauli encoding gates per feature resulting in an integer-frequency spectrum (half of $\llbracket -L, L \rrbracket ^d$). In this case, the corresponding quantum model is a periodic function of period $T=(2\pi)^d$ and thus, we choose $x_{max}  = 2\pi$ for $X_{grid}$ construction.

In Fig.\ref{fig:vqc_random_pauli_1d}, we implement a VQC with $L$=200 Pauli encoding gates, for a 1-dimensional input. We observe that our classical approximation methods are indeed able to reproduce such VQCs. On average, the RFF training  error for \emph{Distinct} and \emph{Grid} sampling is a linear function of the number $D$ of samples taken from $\Omega$. On the other hand, the error using \emph{Tree} sampling exhibits a faster decreasing trend, reaching relatively low errors with only $20\%$ of the spectrum size. 

We conjecture that the efficiency of \emph{Tree} sampling is closely related to the redundancy in the discrete frequency distribution over $\Omega$. In fact, as shown in Fig.\ref{fig:vqc_random_pauli_1d}, Fourier coefficients of the VQC are, on average, correlated to the frequency redundancy in the empirical quantum spectrum. 
Frequencies above a certain threshold $\omega_{effective}$ are merely redundant for this particular encoding scheme, and we observe that they are cut from the quantum model empirical spectrum. The effective spectrum of the VQC is therefore smaller than what the theory predicts. 
Consequently, the fast decreasing trend of the \emph{Tree} sampling stems from the fact that we sample according to the redundancies, therefore requiring less frequency samples. We see that $0.2 \times |\Omega|$ samples are sufficient to sample approximately all frequencies in $[|0,\omega_{effective}|]$.
In Fig.\ref{fig:vqc_random_pauli_d}, we show a similar simulations with a $d$-dimensional input ($d=4$) and $L=5$ Pauli gates per dimension. According to Eq.\ref{eq:size_of_Omega_plus}, the theoretical number of distinct positive frequencies is $7321$. In this case in the tree sampling procedure, we can sample both a frequency and its opposite without removing one of them. Therefore the scheme is a bit less performant than in dimension 1.

\subsubsection{More complex Hamiltonian encodings}

\begin{figure*}[!t]
  \centering
  \subcaptionbox{Fourier transform averaged over different random initialization of the VQC. The intensity of the vertical red lines indicates the concentration of the theoretical frequencies in $\Omega$.}[.45\textwidth][c]{
    \includegraphics[width=0.99\linewidth]{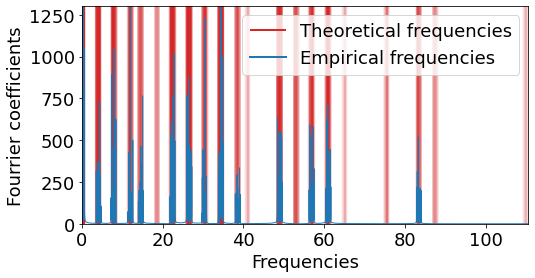}}
    \label{fig:exp_vqc_random_hamiltonian_fourier_transform}
    \quad
  \subcaptionbox{RFF train loss with different sampling methods on the random VQCs. The Distinct sampling benefit from the concentration of frequencies in packets to approximate with less samples.}[.45\textwidth][c]{
    \includegraphics[width=0.99\linewidth]{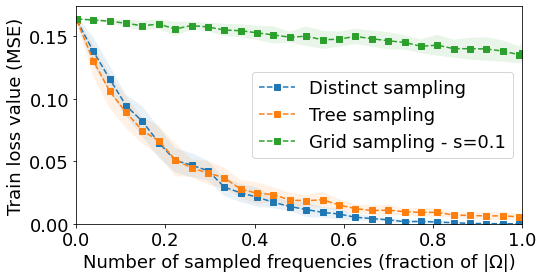}}
    \label{fig:exp_vqc_random_hamiltonian_RFF_loss}
\caption{Random 1d VQC with 4 scaled Paulis and and a 3-qubits $H_{XYZ}$ Hamiltonian}
\label{fig:vqc_random_hamiltonian_complex}    
\end{figure*}

For Pauli encoding, we have seen that \emph{Tree} sampling is highly effective for approximating the quantum model. Consequently, we designed VQCs with different spectrum distributions to study the RFF approximation performance in these cases.


As explained in Section \ref{sec:VQC_preliminaries}, we consider encoding gates of the form $exp(-ix_iH)$ for each dimension $i$. One way to alter the spectrum distribution is the use of more general Hamiltonians $H$. To obtain exotic Hamiltonians while maintaining their physical feasibility (involving only two-bodies interactions), we use the generic expression
\begin{equation}\label{eq:Hxyz}
    H_{XYZ} = \sum_{\langle i,j\rangle} \alpha_{ij}X_iX_j
+
\beta_{ij}Y_iY_j
+
\gamma_{ij}Z_iZ_j
+
\sum_{i} \delta_i P_i
\end{equation}
with the first term describing the interactions: $\langle i,j\rangle$ indicates a pair of connected particles and the second term describing a single particle's energy ($P_i =$ \{$X_i$, $Y_i$ or $Z_i$\}). 

In Fig.\ref{fig:vqc_random_hamiltonian_complex}, we construct VQCs mixing both such Hamiltonians\footnote{in Fig.\ref{fig:vqc_random_hamiltonian_complex}, we used a 3-qubits Hamiltonian defined by: $H_{XYZ} =7 X_0 X_1 + 7 X_1 X_0 + 0.11 X_0 X_2 + 0.1 X_2 X_0 + 8 [ Y_1 Y_2 + Y_2 Y_1 + Z_0 Z_2 + Z_2 Z_0]$} and scaled Paulis\footnote{Scaling factors are $[26.4309,34.4309,22.4309,0.4309]$} as encoding gates, on 1-dimensional inputs. 
In these cases, the corresponding quantum model is no longer $2\pi$-periodic, thus we have to find empirically a good value for $x_{max}$ (by increasing it until the performance reaches a limit). 


With such complex encoding, we witness a different behavior for the \emph{Distinct} sampling method, in  comparison to the previous basic Pauli encoding scenario. Essentially, \emph{Distinct} sampling has a faster than linear scaling, showing a clear and unexpected efficiency of RFFs in this case. We also notice that the \emph{Tree} sampling method has a similar scaling. 
This observation points to the fact that, with the chosen encoding strategy, the frequencies in the spectrum $\Omega$ are concentrated in many packets or groups. This behavior is displayed with the concentrated red lines in Fig.\ref{fig:vqc_random_hamiltonian_complex}. Therefore, even though the frequencies in $\Omega$ have a low redundancy (545 distincts frequencies out of 2017), sampling just one of the many frequencies in a narrow packet is enough for the RFF to approximate it all. To put it diffrently, we can consider that there is a $\Omega_{effective}$ where each packet can be replaced by its main frequency, and the RFF manage to approximate it with fewer samples than the actual size of $\Omega$. To conclude, many distinct frequencies is not a guarantee of high expressivity. 

As for \emph{Grid} sampling, the choice of $s$ seemed too high for this solution to work in this case, in line with the theoretical bounds for this sampling method given in Theorem \ref{thm:number_of_samples_pauli_encoding}.


\subsubsection{Exponential Pauli Encoding}

In order to obtain VQCs with a large number of frequencies, but low redundancy and no concentrated packets, we exploit the exponential encoding scheme proposed in \cite{shin2022exponential}, resulting in a non degenerate quantum spectrum with zero redundancies and thus a uniform probability distribution over integers. In this encoding strategy, encoding Pauli gates are enhanced with a scaling coefficient $\beta_{kl}$ for the $l^{th}$ Pauli rotation gate encoding the component $x_k$. This gives us a total of $3^{Ld}$ positive and negative frequencies. These frequencies can be all distinct with the particular choice of $\beta_{kl} = 3^{l-1}$, resulting in an exponentially large and uniform $\Omega$. Note however that $\Omega$ is analytically known and contain only integer frequencies, mostly very high frequencies for which the usefulness in practice remain to be studied. 

We have tested our classical RFF approximation, shown in Fig.\ref{fig:vqc_random_scaled_pauli_exp}, and obtained again the confirmation that RFF can approximate such an exponential feature space with a fraction of $|\Omega|$. This fraction might however be too large in practice. We also observe as expected that all three strategies have a linear scaling, in line with the absence of redundancy and frequency packets. 


   

\begin{figure*}[t!]
  \centering
  \subcaptionbox{Fourier transform averaged over different random initializations of the exponential encoding VQC with $L=5$. }
  [.45\textwidth][c]{%
    \includegraphics[width=0.99\linewidth]{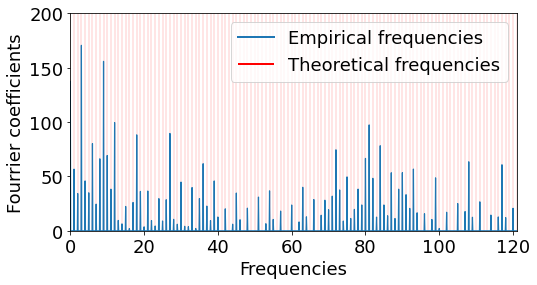}}
    \label{fig:vqc_random_scaled_pauli_exp_FT}
    \quad
  \subcaptionbox{RFF approximation performance of an the exponential encoding VQC with $L=5$.}
  [.45\textwidth][c]{%
    \includegraphics[width=0.99\linewidth]{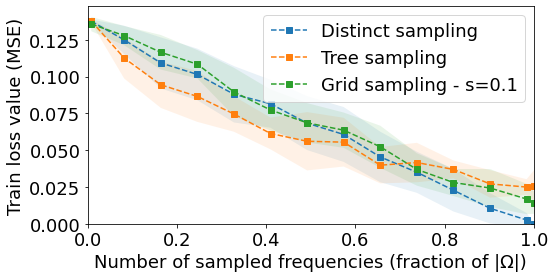}}
    \label{fig:vqc_random_scaled_pauli_exp_RFF}
\caption{Random VQCs with exponentially large spectrum, using scaled Pauli encoding as in \cite{shin2022exponential}.}
\label{fig:vqc_random_scaled_pauli_exp}    
\end{figure*}

With these experiments, we conclude a few important properties. We observe that when some frequencies in the spectrum $\Omega$ have much redundancy (\textit{e.g.} Pauli encoding), these frequencies are empirically the ones with higher coefficients. In such case, the \emph{Tree} sampling strategy is able to approximate the VQC's model with fewer samples than the other methods as expected. With more complex Hamiltonians, concentrated packets of frequencies appear, and even without much redundancy, both \emph{Tree} and \emph{Distinct} sampling require fewer frequency samples to cover these packets. According to these experiments, the worst case scenario for the RFF is a uniform probability distribution where all the three sampling techniques will be equivalent. 
Nonetheless, the theoretical bounds prove that the number of Fourier Features will scale linearly with respect to the spectrum size that scales itself exponentially.


\subsection{Comparing VQC and RFF on Artificial Target Functions}
\label{sec:exp_rff_on_artificial_functions}

In the above section, we trained a RFF to mimic the output of random VQCs. In practice, the ground truth relies on a classical dataset with an underlying function to find. A more practical comparison is to measure the efficiency of both VQC and RFF on a common target function. We want to see when RFF can obtain a similar or better results than the VQC on the same task. 

We have seen that for VQCs with Pauli encoding, the Fourier coefficients are rapidly decreasing, cutting out frequencies higher than $\omega_{effective}$ from the empirical spectrum. 
For this reason, we have chosen a particular synthetic target function: we create a sparse Fourier series  (i.e. having only few non-zero coefficients) as a target function: $s(x) = \sum_{\omega \in \{4,10,60\}} cos(\omega x) + sin(\omega x)$ and a VQC with $L=200$ Pauli encoding gates as the quantum model. 

In Fig.\ref{fig:artificial_targe_functiont_results}, we clearly see that the VQC, as well as RFF with \textit{Tree} sampling, can not learn the frequency $\omega = 60 > \omega_{effective}$ (their train loss reach a high limit) while the RFF models based on \textit{Distinct} and \textit{Grid} sampling can effectively learn the target function with enough frequency samples. This result shows that even when a VQC with Pauli encoding 
is trained, it cannot reach all of its theoretical spectrum, thus questioning the expressivity of such a quantum model. On the other hand, 
its classical RFF approximators 
(\emph{Distinct} and \emph{Grid})
succeed in learning the function in this case.
This is due to the specific choice of the frequencies in $s$ and of the VQC's structure, which has a high redundancy in its spectrum. Indeed, the VQC is not able in practice to obtain non-negligible coefficients for frequencies higher than $\omega_{effective}$. This limit appears similarly for the \emph{Tree Sampling} RFF, but not for the two other types.
Of course, when we chose an artificial function for which all frequencies are below $\omega_{effective}$, the VQC and all RFF methods manage to fit it.

\begin{figure*}
  \centering
  \subcaptionbox{Predictions of the target function $s(x)$ with the quantum model and its corresponding RFF approximator using \textit{Distinct} sampling on 80\% of all frequencies.}
  [.5\textwidth][c]{%
    \includegraphics[width=\linewidth]{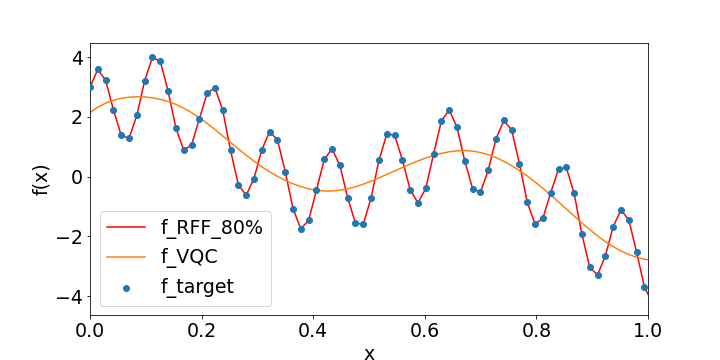}}
    \label{fig:functions(x)}
    \quad
  \subcaptionbox{RFF learning curves for sparse target fitting based on the VQC description. Distinct and Grid sampling are able to outperform the VQC.}
  [.45\textwidth][c]{%
    \includegraphics[width=0.95\linewidth]{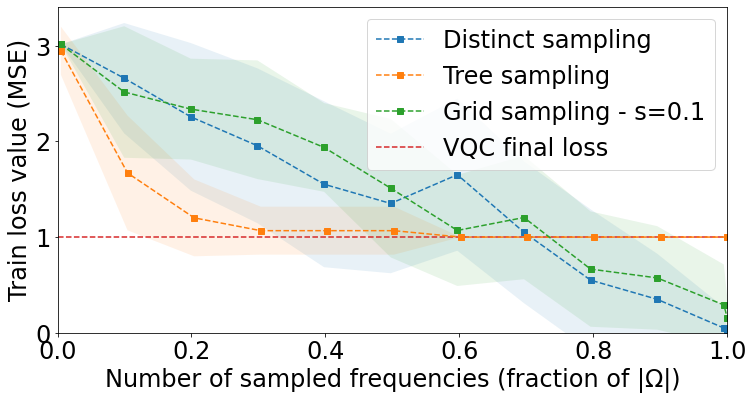}}
    \label{fig:rff_s1}
\caption{\textbf{Fitting a target function} $s(x) = \sum_{\omega \in \{4,10,60\}} cos(\omega x) + sin(\omega x)$ with a VQC architecture of $L=200$ Pauli gates.}
\label{fig:artificial_targe_functiont_results}   
\end{figure*}

\subsection{Comparing VQC and RFF on Real Datasets}
\label{sec:exp_rff_real_datasets}

In order to compare the learning performances of a VQC and its corresponding RFF approximator on real-world data, we choose the fashion-MNIST dataset \cite{xiao2017fashion} and consider a binary image classification task(\emph{coat} and \emph{dress}). We pre-process the input images by performing a principal component analysis keeping the first 5 features (therefore $d=5$) and by re-scaling the new input features between $-\pi$ and $\pi$. We also use the California Housing dataset for a regression task with 5 features and a re-scaling between 0 and $\pi$.

We chose to solve these two problems by training VQCs with Pauli encoding ($L=5$ for each dimension). According to Eq.\ref{eq:size_of_Omega_plus}, the number of distinct positive frequencies in $\Omega$ is 80526. In Fig.\ref{fig:real_datasets}, we observe
that, with very few frequencies sampled, the RFF model with Tree sampling succeeded as well in
learning the distribution of the input datasets.
We conclude that this RFF method allows to closely approximate the trained quantum models.
We observe that for the two tasks, with a radically lower number of frequencies $(0.002 \times |\Omega| \simeq 160)$, the RFF accuracy/MSE loss performance mimics with fidelity the VQC performance even when more frequencies are used.

This result could indicate that the underlying distributions to learn were simple enough, such that the VQC had an excessive expressivity. 
In this case, and despite the large size of $\Omega$, the RFF manage to find similar solutions more quickly. 


\begin{figure*}
  \centering
  \subcaptionbox{Fashion-MNIST dataset (classification)}
  [.45\textwidth][c]{%
    \includegraphics[width=\linewidth]{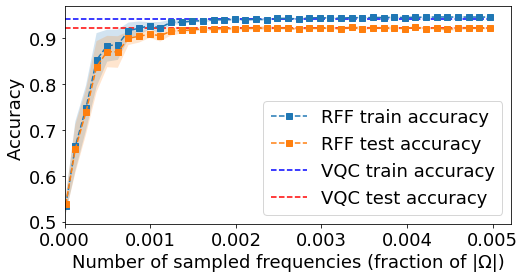}}
    \label{fig:mnist}
    \quad
  \subcaptionbox{California Housing dataset (regression)}
  [.45\textwidth][c]{%
    \includegraphics[width=\linewidth]{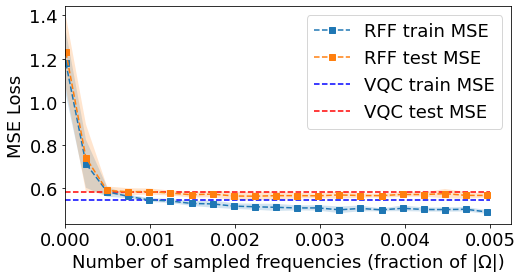}}
    \label{fig:housing}
\caption{\textbf{On real datasets.} Prediction results of a VQC and the its classical RFF approximator (Tree Sampling) on two 5-dimensional real datasets. A very low number of frequency samples is necessary to obtain similar results.}
\label{fig:real_datasets}   
\end{figure*}

\subsection{Number of Samples and Size of $\Omega$}
\label{sec:exp_RFF_scaling_with_Omega}

\begin{figure*}
  \centering
  \subcaptionbox{Experimental bounds for different values of $L, d$.}[.47\textwidth][c]{%
    \includegraphics[width=0.99\linewidth]{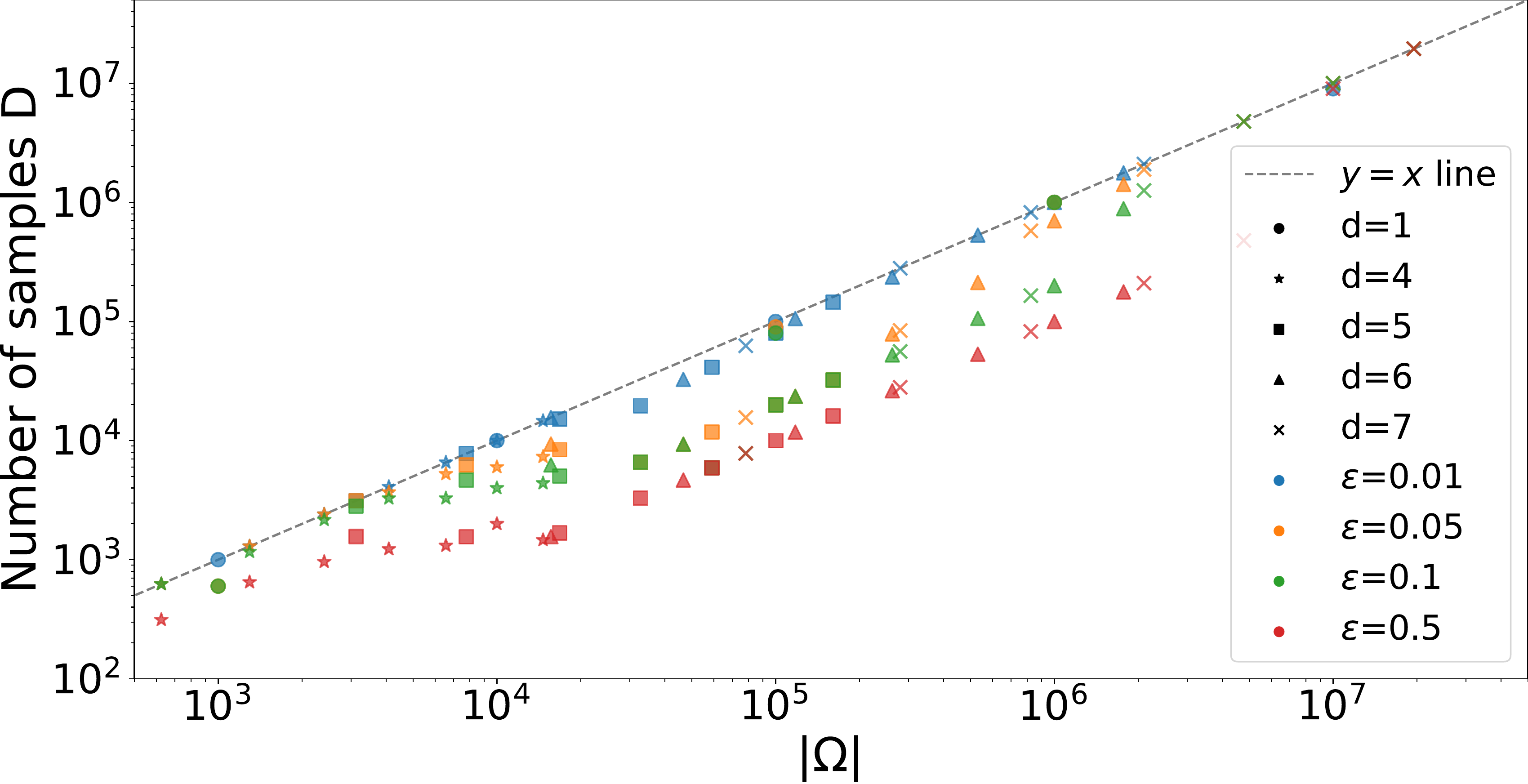}}
    \quad
  \subcaptionbox{Theoretical bound for different values of $L, d$.}[.47\textwidth][c]{%
    \includegraphics[width=\linewidth]{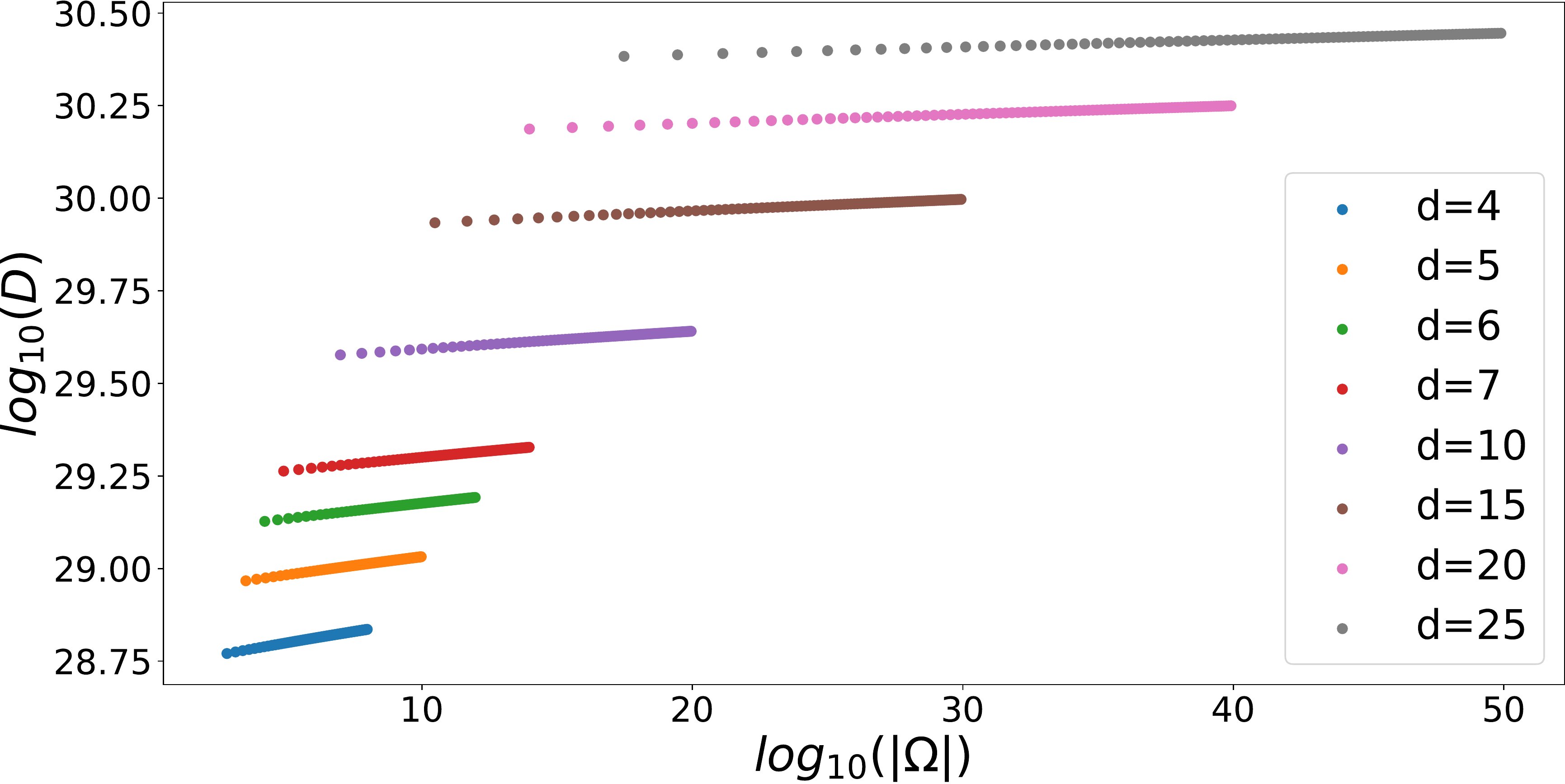}}
\caption{\textbf{Evolution of D as a function of input dimension d and of L encoding gates per dimension, and theoretical bounds.} In agreement with the theoretical bound, the number of samples $D$ given as a fraction of $|\Omega|$ decreases with the growth of the data input dimension and the number of encoding gates.}
\label{fig:impact_D_and_L}    
\end{figure*}

In this section, we test the theoretical bound provided by the theorem \ref{thm:number_of_samples_grid_sampling}. Given a spectrum $\Omega = \llbracket 0, L \rrbracket^d$, a Fourier series model trained on a specific dataset, the theorem bounds the necessary number of samples for a RFF model to approximate the original model with an $\epsilon$ error. This is an approximation to the Pauli encoding VQCs where the spectrum is $\Omega = \llbracket -L, L \rrbracket^d$.
For fixed values of $L, d$, and a spectrum $\Omega = \llbracket 0, L \rrbracket^d$, we implement the following protocol to test this bound:
\begin{itemize}
    \item Generate a dataset of $10^5$ points sampled uniformly from $[0, 1]^d$ and labels coming from a Fourier series on $\Omega$ with coefficients chosen uniformly from $[0, 1/\sqrt{|\Omega|}]$, split into a train set and a test set with respective fractions .9 and .1.
    \item For each value of D in $\{1, k|\Omega|/10 \text{ for }k\in \llbracket 1, 10\rrbracket\}$, sample $D$ frequencies from $\Omega$ without replacement, and train a linear ridge regression with $\lambda=10^{-6}$ on the train set. We performed the training with a Adam optimizer, a learning rate of .001, and between 50 and 200 epochs depending on the size of the spectrum with the constraint of computation time. Compute the output on the test set.
    \item Compute the mean absolute error between the output of the trained model with all the frequencies and the output of all other model. Select the model with the lowest number of samples that has an error below $\epsilon$.
    
\end{itemize}

The results of the application of this protocol are shown figure~\ref{fig:impact_D_and_L}. For the values of $|\Omega|$ between $10^4$ and $10^6$, one can see a significant reduction of the number of samples needed to approximate the whole model. For $\epsilon=.05$, one can expect to need only half of the spectrum, whereas for $\epsilon=.5$, one only need about 10\% of the spectrum. The trend does not continue above $|\Omega|=10^7$.

There are several limitations to this experiment. The main one is the limited training of the models. For the biggest values of $|\Omega|$ we limit ourselves to 50 epochs, which may be not enough to reach the optimal parameters, and thus blur the interpretation. Furthermore although the theorem is valid for every number of data points, the overparameterized regime where there are much more parameters than data points is known to exhibit unusual effects in linear regression \citep{hastie2022surprises}.

Given the choices of $\lambda$ and $\epsilon$, the theoretical bounds are very high for the regimes we experimentally tested, so they are not relevant. The effect that is quantified by the theory appears from $|\Omega|=10^{30}$, e.g one need approximately $D=10^{30}$ samples to approximate $10^{40}$ frequency which is still unfeasible on a classical computer.

\subsection{Discussion}

Overall, the numerical simulations were able to confirm a number of theoretical findings.

Regarding the expressivity of the VQCs, their spectrum is predictable from the encoding gates, as expected. However, we observed that the actual set of frequencies that emerge with non-zero coefficients don't cover the whole spectrum, questioning the effective expressivity of VQCs in practice. It confirms the intuition that frequencies that appear with high redundancy in the spectrum tend to have larger coefficients. 

The three RFF alternatives we proposed to approximate VQCs on several tasks revealed to be efficient. The number $D$ of samples (seen as a fraction of the spectrum size) grows favorably and allows for good approximation. 

In particular, when the redundancy of the spectrum is high, our \emph{Tree Sampling} method, which is the most computationally efficient, revealed to be able to approximate VQCs with fewer frequency samples. On the other hand, Tree Sampling inherits from the same drawbacks as VQCs and are sometimes not capable of learning less redundant frequencies in the VQCs spectrum, whereas \emph{Distinct} and \emph{Grid} sampling can outperform the VQCs in such cases. 

Even if these experiments are encouraging and match our expectations, we are far from the maximal efficiency of RFF methods. Indeed, the scaling of $D$, the number of samples, becomes even more favorable when the input dimension $d$ increase, or more generally when the spectrum has an exponential size. We expect to get closer to the theoretical bounds and see the number of samples becoming an even smaller fraction of the size of the spectrum. This was confirmed in our small scale simulations up to $d \leq 7$, limited by classical resources.
We have seen that, in order for the RFF to match some VQCs, in some cases the approximation requires using 80\% of the frequencies, in others it requires only 0.001\%.

We could go further by training VQCs on actual quantum computers and comparing them to classical RFF methods in practice.


\section{Conclusion}

In this work, we have studied the potential expressivity advantage of Variational Quantum Circuits (VQCs) for machine learning tasks, by providing novel classical methods to approximate VQCs. Our three methods use the fact that sampling few random frequencies from a large dimensional kernel (exponentially large in the case of VQCs) can be enough to provide a good approximation. This can be done given only the description of the VQCs and doesn't require running it on a Quantum Computer. We studied in depth the number of samples and its dependence to key aspects of the VQCs (input dimension, encoding Hamiltonians, circuit depth, VQC spectrum, number of training points). On the theoretical side, we conclude that our classical sampling method can approximate any VQC for machine learning tasks, with a number of sample that should scale favorably, but potentially high constant factors. Experimentally, we have tested our classical approximators on several use cases, using both artificial and real datasets, and our classical methods were able to match the VQC's results if not surpass them. 

These results question the definition of quantum advantage in the context of VQCs, shed some light on their effective expressivity, could pave the way for other classical approximators, but could also help to understand on what the true power of VQCs rely on.

In particular, it opens up questions about alternative encoding schemes, harnessing the effective expressivity of VQCs, and the link between a full expressivity and trainability.

\section{Acknowledgement}
This work was supported by the Engineering and Physical Sciences
Research Council (grants EP/T001062/1) and the H2020-FETOPEN Grant
PHOQUSING (GA no.: 899544).

\newpage

\bibliographystyle{IEEEtran}
\bibliography{refs}

\appendix

\section{Definitions of Linear Ridge Regression and Kernel Ridge Regression}
\label{app:lrr}
We present in this section the Linear Ridge Regression (LRR) and Kernel Ridge Regression (KRR) problem \cite{bishop2006pattern}. The problem of regression is to predict continuous label values from feature vectors. We are given a dataset $\{(x_i, y_i), i\in \llbracket 1, M \rrbracket x_i \in \mathbb{R}^d, y_i \in \mathbb{R} \}$, and to each data point $x$ an associated feature vector $\phi(x) \in \mathbb{R}^p$. The goal of LRR is to construct a parameterized model $f$ such that $f(x) = y$. The model is parameterized by a weight vector $\w$ of size $p$ such that $f(x; w) = \w^T\phi(x)$. Training the model consists of finding the vector $\w*$ that minimizes the loss function
\begin{gather}
    \w^* = \argmin_{\w} \frac{1}{M}\Sum_{i=1}^M|\w^T\phi(x_i) - y_i|^2 + \lambda||\w||^2\\
    = \argmin_{\w} \frac{1}{M}||\boldsymbol{\Phi}\w - \boldsymbol{y}||^2 + \lambda||\w||^2\\
\end{gather}
where $\boldsymbol{\Phi}$ is a matrix of size $M\times p$ with each row $i$ corresponds to $\phi(x_i)^T$ and $\boldsymbol{y}$ is the vector of all the labels $y_i$. 
The first term of the loss is the Mean Square Error (MSE) and corresponds to the difference between the prediction and the ground truth. The second term is the ridge regularization, and prevents the weights from exploding. The magnitude of the regularization is controlled by the hyperparameter $\lambda>0$.

When $p < M$, an analytic solution to this problem is given by $\w^* = (\boldsymbol{\Phi}^T\boldsymbol{\Phi} + M\lambda I_p)^{-1}\boldsymbol{\Phi}^T\boldsymbol{y}$.

As a consequence, to make the LRR possible and have a single solution, the number of training points must be larger than the number of features in the feature space ($\phi(x)$). Otherwise, one can perform a gradient descent. 

The dual formulation of this problem is given by expressing $\w$ as a linear combination of the data points $\w = \boldsymbol{\Phi}^T\alphab$. The minimization on $\w$ become a minimization on $\alphab$ and can be expressed as 
\begin{gather}
    \alphab^* = \argmin_{\alphab}  \frac{1}{M}||\Phib\Phib^T\alphab - \boldsymbol{y}||^2 + \lambda \alphab^T\Phib\Phib^T\alphab\\
\end{gather}

The solution of this problem is $\alphab = (\Phib\Phib^T + M\lambda I_M)^{-1} \boldsymbol{y}$.

Note that the dual solution only depends on the matrix of scalar products between feature vectors $\Phib\Phib^T$. One can then replace this matrix by a kernel matrix $K$ and the obtained model is a Kernel Ridge Regression.

\section{Approximation results for Random Fourier Features}
\label{app:rff_bounds}

We give here two useful results about the bounds of the error of the RFF method. RFFs are supposed to approximate a certain kernel $k$ by using fewer features.  Intuitively, not enough features would lead to imprecise solutions. The following theorems \cite{Rahimi2009, sutherland2015error} bounds the error obtained when comparing the kernel $k(x, y)$ by the RFF approximator $\phi(x)^T\phi(y)$ using $D$ samples.

We recall that the condition on the kernel $k$ is for it to be expressed as
\begin{gather}
    k(\delta) = \Int_{\omega \in \mathcal{X}} p(\omega) e^{-i\omega^T\delta}d\omega
\end{gather}
where $p(\omega)$ is the distribution of the frequencies $\omega$. 
\begin{theorem}
\label{thm:approx_kernel}
Let $\mathcal{X}$ be a compact set of $\mathbb{R}^d$, and $\epsilon > 0$. 
\begin{equation}
\begin{split}
    \mathbb{P}(\sup_{x, y \in \mathcal{X}}|k(x-y) - \phi(x)^T\phi(y)|\geq \epsilon) \leq \\
    66(\frac{\sigma_p |\mathcal{X}|}{\epsilon})^2 exp(-\frac{D\epsilon^2}{4(d+2)})
\end{split}
\end{equation}
with $\sigma_p^2 = \mathbb{E}_p(\omega^T\omega)$, the variance of the frequencies' distribution, and $|\mathcal{X}| = max_{x,x'\in\X}(\|x-x'\|)$ the diameter of $\mathcal{X}$.
\end{theorem}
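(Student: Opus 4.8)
The plan is to reproduce the classical uniform $\epsilon$-net argument of Rahimi and Recht. First I would reduce the two-variable statement to a one-variable one: since $\phi(x)^T\phi(y) = \frac{1}{D}\sum_{i=1}^D \cos(\omega_i^T(x-y))$, the estimator depends only on $\delta = x-y$, so I define $f(\delta) = \phi(x)^T\phi(y) - k(\delta)$ on the difference set $\mathcal{M} = \X - \X$, which is compact with diameter at most $2|\X|$. By Bochner's theorem the sampling distribution $p$ is exactly the one making $\mathbb{E}[\cos(\omega^T\delta)] = k(\delta)$, so each $f(\delta)$ has mean zero and the target becomes a bound on $\PP(\sup_{\delta\in\mathcal{M}}|f(\delta)|\geq\epsilon)$.

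Next I would cover $\mathcal{M}$ by an $r$-net of $T = O\big((|\X|/r)^d\big)$ anchor points $\{\delta_j\}$ and control two quantities. First, \emph{pointwise concentration}: each $f(\delta_j)$ is an average of $D$ i.i.d. terms $\cos(\omega_i^T\delta_j)\in[-1,1]$, so Hoeffding's inequality gives $\PP(|f(\delta_j)|\geq \epsilon/2)\leq 2\exp(-D\epsilon^2/8)$, and a union bound over the $T$ anchors controls all of them simultaneously. Second, \emph{smoothness}: $f$ is differentiable with Lipschitz constant $L_f = \norm{\nabla f(\delta^*)}$ at its maximizer; since $\nabla\big(\tfrac{1}{D}\sum_i\cos(\omega_i^T\delta)\big) = -\tfrac{1}{D}\sum_i \omega_i\sin(\omega_i^T\delta)$ has squared norm at most $\tfrac{1}{D}\sum_i\norm{\omega_i}^2$, one gets $\mathbb{E}[L_f^2]\leq \mathbb{E}\norm{\omega}^2 = \sigma_p^2$, and Markov's inequality yields $\PP(L_f \geq \epsilon/(2r))\leq 4\sigma_p^2 r^2/\epsilon^2$.

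I would then combine the two ingredients: whenever $\sup_\mathcal{M}|f|\geq\epsilon$, either some anchor already exceeds $\epsilon/2$, or the value drifts from its nearest anchor by more than $\epsilon/2$, which forces $L_f \geq \epsilon/(2r)$. A union bound then gives a bound of the schematic form $\kappa_1 r^{-d} + \kappa_2 r^2$ with $\kappa_1 \propto |\X|^d\exp(-D\epsilon^2/8)$ and $\kappa_2 \propto \sigma_p^2/\epsilon^2$. Optimizing over the net radius $r$ (setting $r^{d+2}\propto \kappa_1/\kappa_2$ and substituting back) collapses the exponent from $D\epsilon^2/8$ to $D\epsilon^2/(4(d+2))$ and produces the prefactor $\propto (\sigma_p|\X|/\epsilon)^2$, giving the stated constant $66$ after bounding the remaining numerical factors.

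The main obstacle is the smoothness step: bounding $\mathbb{E}[L_f^2]$ cleanly, because the maximizer $\delta^*$ of $\norm{\nabla f}$ is itself random, so I must pass to the supremum of $\norm{\nabla f(\delta)}^2$ over $\mathcal{M}$ before taking the expectation, using $\norm{\omega\sin(\omega^T\delta)}\leq\norm{\omega}$ uniformly in $\delta$. The only other care points are the book-keeping in the $r$-optimization, which must land exactly the $(d+2)$ in the exponent, and discarding the trivial regime $\sigma_p|\X|/\epsilon < 1$, where the claimed bound exceeds $1$ and therefore holds vacuously.
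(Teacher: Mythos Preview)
Your sketch is the standard Rahimi--Recht $\epsilon$-net argument and is correct in outline. However, the paper does not actually prove this theorem: it is stated in Appendix~\ref{app:rff_bounds} as a quoted result from \cite{Rahimi2009, sutherland2015error} with no accompanying proof. So there is nothing in the paper to compare your argument against; you have supplied the (standard) proof that the authors chose to cite rather than reproduce.
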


The following theorem \cite{sutherland2015error} bounds the actual prediction error when using RFF compared to the KRR estimate. The formula in the original reference contains a sign error and we correct it here.
\begin{theorem}
\label{thm:approx_krr}
Let $\mathcal{X}$ be a compact set of $\mathbb{R}^d$, and $\epsilon > 0$. We consider a training set $\mathcal{D}\{(x_i, y_i)\}_{i=1}^M$. Let $f$ be the KRR model obtained with the true kernel $k$ and regularization $\lambda = M\lambda_0$ for $\lambda_0>0$, and $\tilde{f}$ the KRR model obtained with the approximate kernel and the same regularization. Then we can guarantee $|f(x) - \tilde{f}(x)| \leq \epsilon$ with probability $1 - \delta$ for a number $D$ of samples given by: 
\begin{equation}\label{eq:theorem_samples_vs_error}
\begin{split}
    D = \Omega\Bigg(d\bigg(\frac{(\lambda_0+1)\sigma_y}{\lambda_0^2\epsilon}\bigg)^2\bigg[log(\sigma_p|\mathcal{X}|)\\
    + log\frac{(\lambda_0+1)\sigma_y}{\lambda_0^2\epsilon} - log\delta\bigg]\Bigg)
\end{split}
\end{equation}
with $\sigma_y^2 = \frac{1}{M}\sum_{i=1}^My_i^2$ and $\sigma_p$, $|\mathcal{X}|$ being defined in theorem \ref{thm:approx_kernel}. We recall that in Eq.\ref{eq:theorem_samples_vs_error} the notation $\Omega$ stands for the computational complexity "Big-$\Omega$" notation. 
\end{theorem}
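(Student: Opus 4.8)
The plan is to reduce the bound on the \emph{prediction} error $|f(x)-\tilde f(x)|$ to a bound on the \emph{kernel} approximation error $\sup_{x,y}|k(x,y)-\tilde k(x,y)|$, and then to invoke Theorem~\ref{thm:approx_kernel} to turn the required kernel accuracy into a sample count $D$. The starting point is the closed form of the two Kernel Ridge Regression predictors. Writing $K$ and $\tilde K$ for the $M\times M$ Gram matrices of the true and the approximate kernel on the training inputs, $k_x$ and $\tilde k_x$ for the vectors $(k(x,x_i))_i$ and $(\tilde k(x,x_i))_i$, and setting $\lambda=M\lambda_0$, the representer theorem (Appendix~\ref{app:lrr}) gives $f(x)=k_x^{T}(K+\lambda I)^{-1}\boldsymbol y$ and $\tilde f(x)=\tilde k_x^{T}(\tilde K+\lambda I)^{-1}\boldsymbol y$.

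First I would introduce $\Delta:=\sup_{x,y\in\mathcal X}|k(x,y)-\tilde k(x,y)|$ and record the elementary consequences $\norm{K-\tilde K}\le M\Delta$ and $\norm{k_x-\tilde k_x}\le\sqrt M\,\Delta$, together with $\norm{k_x}\le\sqrt M$ (using the normalization $|k|\le 1$) and $\norm{\boldsymbol y}\le\sqrt M\,\sigma_y$ (since $\norm{\boldsymbol y}^2=M\sigma_y^2$). The two structural facts that make the argument clean are that $K$ and $\tilde K$ are positive semidefinite, so $\norm{(K+\lambda I)^{-1}}\le 1/\lambda$ and likewise for $\tilde K$, and the resolvent identity $(K+\lambda I)^{-1}-(\tilde K+\lambda I)^{-1}=(K+\lambda I)^{-1}(\tilde K-K)(\tilde K+\lambda I)^{-1}$.

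Next I would split the prediction gap as $f(x)-\tilde f(x)=k_x^{T}(\alpha-\tilde\alpha)+(k_x-\tilde k_x)^{T}\tilde\alpha$, where $\alpha=(K+\lambda I)^{-1}\boldsymbol y$ and $\tilde\alpha=(\tilde K+\lambda I)^{-1}\boldsymbol y$. Applying the resolvent identity to $\alpha-\tilde\alpha$ and chaining the norm bounds yields $\norm{\alpha-\tilde\alpha}\le M^{3/2}\sigma_y\Delta/\lambda^2$ and $\norm{\tilde\alpha}\le\sqrt M\,\sigma_y/\lambda$, whence
\begin{equation}
|f(x)-\tilde f(x)|\le\frac{M^2\sigma_y\Delta}{\lambda^2}+\frac{M\sigma_y\Delta}{\lambda}=\frac{(1+\lambda_0)\sigma_y}{\lambda_0^{2}}\,\Delta ,
\end{equation}
where the final equality uses $\lambda=M\lambda_0$ so that every power of $M$ cancels. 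This cancellation is precisely why the regularization is scaled as $M\lambda_0$, and tracking the signs in the resolvent identity and in this split is where the sign correction to the original reference enters.

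Finally, it suffices to force $\Delta\le\epsilon_k:=\lambda_0^2\epsilon/\big((1+\lambda_0)\sigma_y\big)$, which makes the displayed bound at most $\epsilon$. Plugging $\epsilon_k$ into Theorem~\ref{thm:approx_kernel} and demanding $66(\sigma_p|\mathcal X|/\epsilon_k)^2\exp(-D\epsilon_k^2/4(d+2))\le\delta$, I would invert the tail bound for $D$, obtaining $D=\Omega\big((d/\epsilon_k^2)[\log(\sigma_p|\mathcal X|/\epsilon_k)-\log\delta]\big)$; substituting $\epsilon_k$ and absorbing the numerical constants $66,4,2$ into the $\Omega(\cdot)$ reproduces the stated expression, with the factor $d+2$ accounting for the linear-in-$d$ scaling. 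The main obstacle is the third step: the stability analysis of the regularized inverse, where one must pair the resolvent identity with the spectral bound $\norm{(K+\lambda I)^{-1}}\le 1/\lambda$ and keep the $M$- and $\lambda_0$-dependence exact so that the resulting prediction bound is $M$-independent; everything else is bookkeeping and a one-line inversion of the exponential tail.
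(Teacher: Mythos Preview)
Your argument is correct and is precisely the standard stability analysis one finds in the reference the paper cites: bound the KRR prediction gap via the resolvent identity, reduce to a uniform kernel error $\Delta$, and then invert the tail bound of Theorem~\ref{thm:approx_kernel}. The paper itself does not supply a proof of Theorem~\ref{thm:approx_krr}; it quotes the result from \cite{sutherland2015error} (noting only a sign correction), so there is nothing further to compare against.
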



\section{Approximation results for RFF in the context of VQCs}

\subsection{Distinct sampling in the Pauli encoding case}
\label{app:proof_pauli}
In the case of Pauli encoding only, we know that $\Omega = \llbracket-L,L\rrbracket^d$ (considered here to be the full spectrum, not $\Omega_+$ defined in Section \ref{sec:quantum_models_are_shifinvariantkernel}, which would have been equivalent). In one dimension, we simply have $\sigma_p = 1/L\sum_{\ell=-L,\cdots,L}\ell^2 = O(L^2)$. In dimension $d$, a frequency $\omega$ is given by its values on each dimension $(j_1,\cdots,j_d)$ with $j_k\in[|-L,L|]$. We similarly have
\begin{equation}
    \sigma_p = \frac{1}{(2L+1)^d}\sum_{j_1,\cdots,j_d} j_1^2+\cdots+j_d^2 
\end{equation}

Note that $\sum_{j_1,\cdots,j_d}j_1^2+\cdots+j_d^2$ is $d(2L+1)^{d-1}$ times the sum of all squares, 
\begin{equation}
\begin{split}
    \sigma_p = \frac{d(2L+1)^{d-1}}{(2L+1)^d} \sum_{\ell=-L}^{L}\ell^2
    = \frac{d}{2L+1}\frac{2L(L+1)(2L+1)}{6} \\
    = O(dL^2) = O(d|\Omega|^{2/d})
\end{split}    
\end{equation}

The expression is then obtained by replacing the value of $\sigma_p$ in theorem \ref{thm:approx_krr}.

We note that we can generalize this results to scaled Pauli encoding, as done in \cite{shin2022exponential}, by replacing $L$ by a term growing as $c^L$ where $c$ is a constant. $D$ would grow linearly in $L$ and not logarithmically anymore.

\subsection{Grid sampling with a general hamiltonian}
\label{sec:gridsampling_proof}

The following theorem  bounds the approximation between a function defined by its Fourier series and another function with frequencies distant by at most a constant $s$ of the original frequencies.

Let $\mathcal{X}$ a compact set of $\mathbb{R}^d$ with diameter $|\mathcal{X}|$ and $\Omega$ a finite subset of $\mathcal{X}$. Let $f(x) = \Sum_{\omega \in \Omega} a_\omega cos(\omega^Tx) + b_\omega sin(\omega^Tx)$. Let $\Omega'$ a subset of $\mathcal{X}$ and $s>0$ such that $\forall \omega \in \Omega, \:\exists \omega'\in\Omega, \: |\omega-\omega'|\leq s$.


Let $\mathcal{F}_{\Omega'} = \bigg\{ \Sum_{\omega \in \Omega'} a_\omega cos(\omega^Tx) + b_\omega sin(\omega^Tx), a_\omega, b\omega \in \mathbb{R} \bigg\}$.

\begin{theorem}
It exists f' $\in \mathcal{F}_{\Omega'}$ such that 
\begin{equation}
    \sup_{x \in \mathcal{X}}|f'(x) - f(x)| \leq s C
\end{equation}
with $C = |\mathcal{X}| |f|_\infty$.
\end{theorem}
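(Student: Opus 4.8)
The plan is to build the approximant $f'$ explicitly and then control the error function-by-function. For each $\omega \in \Omega$ pick, using the hypothesis, a point $\sigma(\omega) \in \Omega'$ with $\|\omega - \sigma(\omega)\| \le s$, and set $f'(x) = \sum_{\omega \in \Omega} a_\omega \cos(\sigma(\omega)^T x) + b_\omega \sin(\sigma(\omega)^T x)$, collecting coefficients whenever several frequencies share the same image. By construction $f' \in \mathcal{F}_{\Omega'}$, so it remains only to bound $\sup_{x \in \mathcal{X}}|f(x) - f'(x)|$. Writing the difference as a sum over $\omega$ of the two trigonometric increments reduces the problem to estimating $|\cos(\omega^T x) - \cos(\sigma(\omega)^T x)|$ and its sine analogue, uniformly in $x$.

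The key estimate is a Lipschitz bound. Since $\cos$ and $\sin$ are $1$-Lipschitz, we have $|\cos(\omega^T x) - \cos(\sigma(\omega)^T x)| \le |(\omega - \sigma(\omega))^T x|$, and by Cauchy--Schwarz together with $\|\omega - \sigma(\omega)\| \le s$ and the diameter bound $\|x\| \le |\mathcal{X}|$, each such increment is at most $s\,|\mathcal{X}|$; the same holds for the sine terms. Equivalently one may apply the mean value theorem in the frequency variable, which produces the same factor $|(\omega-\sigma(\omega))^T x|$. This controls every individual basis-function perturbation by $s\,|\mathcal{X}|$, uniformly in $x \in \mathcal{X}$.

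The remaining, and genuinely delicate, step is the passage from these term-wise bounds to the stated constant $C = |\mathcal{X}|\,|f|_\infty$. Summing the increments naively yields $\sup_x|f(x)-f'(x)| \le s\,|\mathcal{X}| \sum_{\omega}(|a_\omega| + |b_\omega|)$, that is, the $\ell_1$ norm of the Fourier coefficients rather than $|f|_\infty$. I expect this to be the main obstacle, and I would resolve it in one of two ways. If $|f|_\infty$ is read as the Fourier-algebra norm $\sum_{\omega}(|a_\omega|+|b_\omega|)$, the bound is immediate. If instead $|f|_\infty$ is the true supremum norm, the clean route is to exploit the modulation identity $f'(x) = \cos(\xi^T x)\, f(x) + \sin(\xi^T x)\,\tilde{f}(x)$ --- valid when all shifts $\xi = \omega - \sigma(\omega)$ coincide, with $\tilde{f}$ the conjugate series --- so that $f - f'$ is expressed through $f$ itself, yielding the factor $|f|_\infty$ via $|\sin(\xi^T x)| \le s\,|\mathcal{X}|$ and $1-\cos \le (\xi^T x)^2/2$, and then to bound the conjugate function by a Bernstein-type inequality. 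The real difficulty is that a grid assignment produces \emph{non-uniform} shifts $\xi_\omega$, which breaks this factorization; handling the non-uniform case --- either by grouping frequencies according to their common shift, or by a first-order expansion in the shift parameter combined with a uniform bound on the derivative of the interpolating series $F(t,x) = \sum_\omega a_\omega \cos(((1-t)\sigma(\omega)+t\omega)^T x) + b_\omega \sin(((1-t)\sigma(\omega)+t\omega)^T x)$ --- is where the careful work lies.
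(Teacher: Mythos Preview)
Your construction of $f'$ and the term-wise Lipschitz bound are exactly what the paper does; the paper merely factors via the product-to-sum identities $\cos A-\cos B=-2\sin\frac{A+B}{2}\sin\frac{A-B}{2}$ and $\sin A-\sin B=2\cos\frac{A+B}{2}\sin\frac{A-B}{2}$ before bounding $|\sin(\tfrac{(\omega-b(\omega))^Tx}{2})|\le \tfrac{1}{2}|\omega-b(\omega)|\,|x|$, which is identical in effect to your Lipschitz step. So up through the inequality
\[
\sup_{x\in\mathcal X}|f(x)-f'(x)|\le s\,|\mathcal X|\sum_{\omega\in\Omega}\bigl(|a_\omega|+|b_\omega|\bigr)
\]
you and the paper coincide.

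Where you diverge is in the final passage to $C=|\mathcal X|\,|f|_\infty$. You are right to flag this: the paper's own proof arrives at the $\ell_1$-coefficient bound above and then simply writes it as $|f|_\infty$, i.e.\ the notation $|f|_\infty$ in the statement is being used for $\sum_\omega(|a_\omega|+|b_\omega|)$, not for the supremum norm. With that reading the proof is complete at the naive summation, and your first resolution (``if $|f|_\infty$ is read as the Fourier-algebra norm, the bound is immediate'') is the correct one. Your elaborate Plan~B for the genuine supremum norm is unnecessary here, and in fact cannot succeed in general: the inequality $\sum_\omega(|a_\omega|+|b_\omega|)\le \sup_x|f(x)|$ is false for trigonometric polynomials, so no argument will produce the sup-norm constant from this construction without additional hypotheses.
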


\begin{proof}
For each $\omega \in \Omega$ let $b(\omega) \in \Omega'$ be such that $|\omega - b(\omega)|\leq s$. Such element exists by definition but is not necessarily unique. Let $f'(x) = \Sum_{\omega \in \Omega} a_\omega cos(b(\omega)^Tx) + b_\omega sin(b(\omega)^Tx)$. The $b(\omega)$s are not necessarily different therefore there might be less frequencies in $f'$ than in $f$. 

\begin{equation}
\begin{split}
    |f(x) - f'(x)| = 2\bigg|\Sum_{\omega \in \Omega} sin(\frac{(\omega - b(\omega))^T}{2}x)\\
    [b_\omega sin(\frac{(\omega + b(\omega))^T}{2}x) -
    a_\omega cos(\frac{(\omega + b(\omega))^T}{2}x)] \bigg|\\
                \leq 2\Sum_{\omega \in \Omega}\big| \frac{(\omega - b(\omega))^T}{2}\big||x|[|b_\omega|+ |a_\omega|] \\
                \leq s |x|\Sum_{\omega \in \Omega}|b_\omega|+ |a_\omega| \\
                \leq s |\mathcal{X}| |f|_\infty
\end{split}
\end{equation}

\end{proof}

We shall here extend the proof where we sample from the grid described above. Let us note $\hat{f_s}$ the RFF model with the whole grid and $\tilde{f}$ the RFF model with D samples from the grid below. For all $x \in \mathcal{X}$ we have 
\begin{equation}
    |\tilde{f}(x) - f(x)| \leq |\tilde{f}(x) - \hat{f_s}| + |\hat{f_s} - f(x)|\\
    \leq |\tilde{f}(x) - \hat{f_s}| + sC
\end{equation}
Then 
\begin{equation}
    \PP(|\tilde{f}(x) - f(x)| \geq \epsilon) \leq \PP(|\tilde{f}(x) - \hat{f_s}| \geq \epsilon - sC)
\end{equation}

for $s < \epsilon/C$.

In this case $|\Omega| = (\omega_{max}/s)^d$.
Using the expression of Section \ref{app:proof_pauli}, we can guarantee that $|f(x) - \tilde{f}(x)| \leq \epsilon$ with probability $1 - \delta$ if 
\begin{equation}
    D = \Omega\Bigg(d\frac{1}{(\epsilon - sC)^2}\bigg[log(\omega_{max}/s) + log\frac{1}{\epsilon - sC} - log\delta\bigg]\Bigg)
\end{equation}

\end{document}